    \pretocmd{\NAT@citexnum}{\@ifnum{\NAT@ctype>\z@}{\let\NAT@hyper@\relax}{}}{}{}
\theoremstyle{plain}
\newtheorem{claim}{Claim}[section]
\newtheorem{defn}{Definition}[section]
\newtheorem{obs}[defn]{Observation}
\newcommand{\Real}{\ensuremath{\mathbb{R}}\xspace}%
\newcommand{\eps}{\ensuremath{\varepsilon}}
\newcommand{\cost}{\ensuremath{\text{cost}}}
\newcommand{\diam}{\ensuremath{\text{diam}}}
\newcommand{\dmin}{\ensuremath{d_{\text{min}}}}
\newcommand{\dmax}{\ensuremath{d_{\text{max}}}}
\newcommand{\opt}{\ensuremath{{\text{opt}}}}
\newcommand{\Q}{\ensuremath{{\cal Q}}}
\newcommand{\level}{\ensuremath{\text{level}}}
\newcommand{\B}{\ensuremath{{\cal B}}}
\newcommand{\Sp}{\ensuremath{{\cal S'}}}
\DeclareMathOperator*{\argmin}{arg\,min}
\begin{document}
\title{Approximate Clustering via Metric Partitioning\footnote{This material is based upon work supported by
the National Science Foundation under Grant CCF-1318996}}

\author{Sayan Bandyapadhyay\thanks{sayan-bandyapadhyay@uiowa.edu}} 
\author{Kasturi Varadarajan\thanks{kasturi-varadarajan@uiowa.edu}}
\affil{
  Department of Computer Science\\
  University of Iowa, Iowa City, USA\\
  }
  
\authorrunning{S.\,Bandyapadhyay and K.\,Varadarajan} 

\Copyright{Sayan Bandyapadhyay and Kasturi Varadarajan}

\subjclass{I.3.5 Computational Geometry and Object Modeling 
}
\keywords{Approximation Algorithms, Clustering, Covering, Probabilistic Partitions}
\maketitle

\begin{abstract}
In this paper we consider two metric covering/clustering problems - \textit{Minimum Cost Covering Problem} (MCC) and $k$-clustering. In the MCC problem, we are given two point sets $X$ (clients) and $Y$ (servers), and a metric on $X \cup Y$. We would
like to cover the clients by balls centered at the servers. The objective 
function to minimize is the sum of the $\alpha$-th power of the radii of
the balls. Here $\alpha \geq 1$ is a parameter of the problem (but not of
a problem instance). MCC is closely related to the $k$-clustering problem. The main difference between $k$-clustering and MCC is that in $k$-clustering one needs to select $k$ balls to cover the clients.

For any $\eps > 0$, we describe quasi-polynomial time $(1 + \eps)$ approximation algorithms for both of the problems. However, in case of $k$-clustering the algorithm uses $(1 + \eps)k$ balls. Prior to our work, a $3^{\alpha}$ and a ${c}^{\alpha}$ 
approximation were achieved by polynomial-time algorithms for MCC and $k$-clustering, respectively, where $c > 1$ is an absolute constant. These two problems are thus interesting examples of metric covering/clustering problems that admit $(1 + \eps)$-approximation (using $(1+\eps)k$ balls in case of $k$-clustering), if
one is willing to settle for quasi-polynomial time. In contrast, for the variant of MCC where $\alpha$ is part of the input, we show under standard assumptions that no polynomial time algorithm can achieve an approximation factor better than $O(\log |X|)$ for $\alpha \geq \log |X|$. 
%
\end{abstract} 

\section{Introduction}\label{sec:intro}

We consider two metric covering/clustering problems. In the first problem, we are given two point sets $X$ (clients) and $Y$ (servers), and a metric $d$ on $X \cup Y$. For $z \in X \cup Y$
and $r \geq 0$, the ball $B(z,r)$ centered at $z$ and having radius $r \geq 0$ is
the set $\{y \in X \cup Y | d(z,y) \leq r\}$. A \textit{cover} for a subset $P \subseteq X$ is a set of balls, each centered at a point of $Y$, whose union contains $P$. The \textit{cost} of a set $\mathcal{B}=\{B_1,\ldots,B_k\}$ of balls, denoted by $\cost(\mathcal{B})$,  is $\sum_{i=1}^k {r(B_i)}^\alpha$, where $r(B_i)$ is the radius of $B_i$, and $\alpha \geq 1$ is a parameter of the problem (but not of
a problem instance). The goal is to compute a minimum cost \textit{cover} for 
the clients $X$. We refer to this problem as the \textit{Minimum Cost Covering Problem} (MCC).

In the second problem, we are given a set $X$ of $n$ points, a metric $d$ on $X$, and a positive integer $k$. Unlike in the case of MCC, here each ball is centered at a point in $X$.\footnote{Our results do generalize to the problem where we distinguish between clients and servers as in the MCC.} The \textit{cost} $\cost(\mathcal{B})$ of a set $\mathcal{B}$ of balls is defined exactly in the same way as in the case of MCC. The goal is to find a set $\mathcal{B}$ of $k$ balls whose union contains all the points in $X$ and $\cost(\mathcal{B})$ is minimized. We refer to this problem as $k$-clustering.

Inspired by applications in wireless networks, MCC has been well studied \cite{Lev-TovP05}. One can consider the points in $Y$ as the potential locations of mobile towers and the points in $X$ as the locations of customers. A tower can be configured in a way so that it can serve the customers lying within a certain distance. But the service cost increases with the distance served. The goal is to serve all the customers minimizing the total cost. For modelling the energy needed for wireless transmission, it is common to consider the value of $\alpha$ to be at least 1. 

For the MCC problem with $\alpha = 1$, a primal-dual algorithm of Charikar and Panigrahy \cite{CharikarP04} leads to an approximation guarantee of $3$; their result generalizes to $\alpha \geq 1$, with an approximation guarantee of $3^{\alpha}$. The problem
is known to be NP-hard for $\alpha > 1$, even when $X$ and $Y$ are points
in the Euclidean plane \cite{AltABEFKLMW06}. 
The case $\alpha=1$ has received particular attention. The first PTAS for the Euclidean plane was designed by Lev-Tov and Peleg \cite{Lev-TovP05}. Later, Gibson~{\em et.~al} \cite{GibsonKKPV12} have designed a polynomial time exact algorithm for this problem when $X$ and $Y$ are points in the plane, and the underlying distance function $d$ is
either the $l_1$ or $l_{\infty}$ metric. For the $l_2$ metric they also get an exact algorithm if one assumes two candidate solutions can be compared efficiently; without this assumption, they get a $(1 + \eps)$ approximation. Their algorithm is based on a separator theorem that, for any optimal solution, proves the existence of a balanced separator that intersects with at most 12 balls in the solution. In a different work they have also extended the exact algorithm to arbitrary metric spaces \cite{GibsonKKPV10}. The running time is quasi-polynomial if the aspect ratio of the metric (ratio of maximum to minimum interpoint distance) is bounded by a polynomial in the number of points. When the aspect ratio is not polynomially bounded, they obtain a $(1 + \eps)$ approximation in quasi-polynomial time. Their algorithms are based on
a partitioning of the metric space that intersects a small number of balls in the
optimal cover. 


When $\alpha > 1$, the structure that holds for $\alpha = 1$ breaks down. It is no longer the case, even in the Euclidean plane, that there is a good separator
(or partition) that intersects a small number of balls in an optimal solution.
In the case $\alpha=2$ and the Euclidean plane, the objective function models the total area of the served region, which arises in many practical applications. Hence this particular version has been studied in a series of works. Chuzhoy developed an unpublished 9-factor approximation algorithm for this version. Freund and Rawitz \cite{FreundR03} present this algorithm and give a primal fitting interpretation of the approximation factor. Bilo~{\em et. al}\ \cite{BiloCKK05} have extended the techniques of Lev-Tov and Peleg \cite{Lev-TovP05} to get a PTAS that works for any $\alpha \geq 1$ and for any fixed dimensional Euclidean space. The PTAS is based on a sophisticated use of the \textit{shifting strategy} which is a popular technique in computational geometry for solving problems in $\mathbb{R}^d$ \cite{ErlebachJS05,HochbaumM85}. For general metrics, however,  the best known approximation guarantee for $\alpha > 1$ remains the already
mentioned $3^{\alpha}$ \cite{CharikarP04}. 

The $k$-clustering problem has applications in many fields including Data Mining, Machine Learning and Image Processing. 
Over the years it has been studied extensively from both theoretical and practical perspectives \cite{BiloCKK05,CharikarP04,DoddiMRTW00,GibsonKKPV10,GibsonKKPV12,Schaeffer07}. The problem can be seen as a variant of MCC where $Y=X$ and at most $k$ balls can be chosen to cover the points in $X$. As one might think, the constraint on the number of balls that can be used in $k$-clustering makes it relatively harder than MCC. Thus all the hardness results for MCC also hold for $k$-clustering. For $\alpha = 1$, Charikar and Panigrahy \cite{CharikarP04} present a polynomial time algorithm with an approximation guarantee of about $3.504$. Gibson~{\em et.~al} \cite{GibsonKKPV10,GibsonKKPV12} obtain the same results for $k$-clustering with $\alpha=1$ as the ones
described for MCC, both in $\mathbb{R}^d$ and arbitrary metrics. Recently, Salavatipour and Behsaz \cite{BehsazS12} have obtained a polynomial time exact algorithm for $\alpha=1$ and metrics of unweighted graphs, if we assume that no singleton clusters are allowed. 
However, in case of $\alpha > 1$ the best known approximation factor (in polynomial time) for general metrics is $c^{\alpha}$, for some absolute constant $c > 1$; this follows from the analysis of Charikar and Panigrahy \cite{CharikarP04}, who explicitly study only the case $\alpha = 1$. In fact, no better polynomial time approximation is known even for the Euclidean plane. We note that though the polynomial time algorithm in \cite{BiloCKK05} yields a $(1+\eps)$ approximation for $k$-clustering in any fixed dimensional Euclidean space and for $\alpha \geq 1$, it can use $(1+\eps)k$ balls.

In addition to $k$-clustering many other clustering problems ($k$-means, $k$-center, $k$-median etc.) have been well studied 
\cite{BandyapadhyayV16,VPC16,FRS16,har2011geometric}. 

In this paper we address the following interesting question. Can the techniques employed by \cite{BiloCKK05} for fixed dimensional Euclidean spaces be generalized to give $(1 + \eps)$ approximation for MCC and $k$-clustering in any metric space? Our motivation for studying the problems in a metric context is partly that it includes two geometric contexts: (a) high dimensional Euclidean spaces; and (b) shortest path distance metric in the presence of polyhedral obstacles in $\Real^2$ or $\Real^3$. 


\subsection{Our Results and Techniques}
In this paper we consider the metric MCC and $k$-clustering with $\alpha \geq 1$. For any $\epsilon > 0$, we design a $(1+\epsilon)$-factor approximation algorithm for 
MCC that runs in quasi-polynomial time, that is, in $2^{(\log mn/\eps)^c}$ time, where $c > 0$ is a constant, $m = |Y|$, and $n = |X|$. We also have designed a similar algorithm for $k$-clustering that uses at most $(1+\eps)k$ balls and yields a solution whose cost is at most $(1+\eps)$ times the cost of an optimal $k$-clustering solution. The time complexity of the latter algorithm is also quasi-polynomial. As already noted, somewhat stronger guarantees are already known for the case $\alpha = 1$ of these problems \cite{GibsonKKPV10}, but the structural properties that hold for $\alpha = 1$ make it rather special.

The results in this paper should be
compared with the polynomial time algorithms \cite{CharikarP04} that guarantee
$3^{\alpha}$ approximation for MCC and ${c}^{\alpha}$ approximation for $k$-clustering. The MCC and $k$-clustering are thus interesting examples of metric covering/clustering problems that admit $(1 + \eps)$-approximation (using $(1+\eps)k$ balls in case of $k$-clustering), if
one is willing to settle for quasi-polynomial time. From this perspective, our results are surprising, as most of the problems in general metrics are APX-hard. The MCC and $k$-clustering are also examples where the techniques used in fixed dimensional Euclidean spaces generalize nicely to metric spaces. This is in contrast to the {\em facility location problem} \cite{Arora}.

The algorithms that we have designed for both of the problems use similar techniques that exploit the following key property of optimal covers: there are only
a ``small'' number of balls whose radius is ``large''. We can therefore afford to guess
these balls by an explicit enumeration. However, there can be a ``large'' number of balls with ``small'' radius. To help `find' these, we partition the metric space
into blocks (or subsets) with at most half the original diameter, and recurse on each block.
We have to pay a price for this recursion in the approximation guarantee. This
price depends on the number of blocks in the partition that a small radius
ball can intersect. (This is not an issue in the case $\alpha = 1$, where each ball that is not guessed intersects precisely one of the blocks \cite{GibsonKKPV10}.)

We are led to the following problem: is there a way to probabilistically 
partition a metric space into blocks of at most half the diameter, so that
for any ball with ``small'' radius, the expected number of blocks that intersect
the ball can be nicely bounded? 
The celebrated partitioning algorithms of 
Bartal \cite{Bartal} and  Fakcharoenphol, Rao, and Talwar \cite{FRT} guarantee that the probability 
that such a ball is intersected by two or more blocks is nicely bounded.
However, their bounds on the probability that a small ball is intersected do not directly imply a good bound on the expected number of blocks intersected by a small ball. Indeed, if one employs the partitioning algorithm of \cite{FRT}, the expected number of blocks intersected by a small ball can be quite ``large''
. Fortunately, the
desired bound on the expectation can be shown to hold for the algorithm of Bartal \cite{Bartal}, even though he did not study the expectation itself. We use a similar partitioning scheme and derive the expectation bound in Section~\ref{sec:partition}, using an analysis that closely tracks previous work \cite{ abraham2006advances,bartal2004graph, kamma2014cutting}. While the bound on the expectation is easily derived from previous work, our work is the first to study and fruitfully apply this bound.  

The algorithms for MCC and $k$-clustering, which use the partitioning scheme of Section~\ref{sec:partition}, are described in Section~\ref{sec:MCC} and \ref{sec:clustering}, respectively.
In Section \ref{sec:inapprox}, we consider the approximability of a variant of the MCC where we allow $\alpha$ to be part of the input.
For $\alpha \geq \log |X|$, we show, under standard complexity theoretic assumptions, that no polynomial (or quasi-polynomial) time algorithm for MCC can achieve an approximation factor better than $O(\log |X|)$. This partly explains the dependence on $\alpha$ of the running time of our algorithms.

\section{The Partitioning Scheme}\label{sec:partition}
Let $Z$ be a point set with an associated metric $d$, let $P \subseteq Z$
be a point set with at least $2$ points, and $n \geq |P|$ be a parameter. For $Q \subseteq Z$, denote the maximum interpoint distance (or diameter) of $Q$ by $\diam(Q)$. Consider any partition of $P$ into subsets (or blocks) $\{P_1, P_2, \ldots, P_t\}$, where $2\leq t\leq |P|$. Abusing notation, we will also view $\{P_1, P_2, \ldots, P_t\}$ as a sequence of blocks. We say that $P_i$ non-terminally (resp. terminally) intersects a ball $B$ if $P_i$ intersects $B$ and it is not (resp. it is) the last set in the sequence $P_1, P_2, \ldots, P_t$ that intersects $B$. We would like to find a partition $\{P_1, P_2, \ldots, P_t\}$ of $P$ that ensures the following properties:


\begin{enumerate}
 \item For each $1 \leq i \leq t$, $\diam(P_i) \leq$ $\diam(P)/2$.
 
 \item For any ball $B$ (centered at some point in $Z$) of radius $r \leq \frac{\diam(P)}{16 \log n}$,
the expected size of the set $\{ i | P_i \cap B \neq \emptyset \}$ is
at most $1 + c \frac{r}{\diam(P)} \log n$, where $c > 0$ is a constant. In other words, the expected number
of blocks in the partition that intersect $B$ is at most $1 + c \frac{r}{\diam(P)} \log n$.

\item For any ball $B$ (centered at some point in $Z$) of radius $r \leq \frac{\diam(P)}{16 \log n}$, the expected number
of blocks in the partition that non-terminally intersect $B$ is at most $c \frac{r}{\diam(P)} \log n$, where $c > 0$ is a constant.
\end{enumerate}

We note that the second property follows from the third, as the number of blocks that intersect ball $B$ is at most one more than the number of blocks that non-terminally intersect $B$. 
We will design a probabilistic partitioning algorithm that finds a partition with the desired properties. 

\subsubsection{The Partitioning Scheme of \cite{FRT}}
We first explain why the probabilistic partitioning algorithm of \cite{FRT} does
not achieve this guarantee. In this algorithm, we first pick a $\beta$ uniformly
at random from the interval $[\frac{\delta}{8}, \frac{\delta}{4}]$, where
$\delta = \diam(P)$. Also, let $\pi_1, \pi_2, \ldots, \pi_p$ be a permutation of
$P$ chosen uniformly at random. We compute $P_1, P_2, \ldots, P_p$ in order as follows. Suppose we already computed $P_1, \ldots, P_{i-1}$. We let
\[ P_i = \{x \in P \setminus (P_1 \cup P_2 \cup \cdots \cup P_{i-1}): d(x,\pi_i) \leq \beta \}.\]
We will refer to $P_i$ as $\pi_i$'s cluster. We return the partition $\{P_i \ | \ P_i \neq \emptyset\}$.

Consider the following weighted tree. Let vertex $u$ be connected to vertices
$u_1, u_2, \ldots, u_b$ using edges of weight $\frac{\delta}{16 \log n}$. Here,
$n$, $\delta$, and $b$ are parameters. Let $V_1, V_2, \ldots, V_b$ be disjoint
sets with $b$ vertices each. For each $i$, $u_i$ is connected to every vertex
in $V_i$ using edges of weight $\frac{\delta}{4} - \frac{\delta}{16 \log n}$. Finally, $z$ is a new vertex that is connected to $u$ using an edge of weight $\frac{3\delta}{4}$. Consider the metric induced by this weighted graph, and let
$P$ denote the vertex set. That is $P = \{u\} \cup \bigcup_i \{u_i\} \cup
\bigcup_i V_i \cup \{z\}$. Set $n:= |P|$ and note that $b = \Theta(\sqrt{n})$. Also, $\delta = \diam(P)$.  

Let $B = B(u,r)$ where $r = \frac{\delta}{16 \log n}$. Notice that the ball $B$
consists of the points $\{u, u_1, u_2, \ldots, u_b \}$. Consider running the
probabilistic partitioning algorithm of \cite{FRT}, described above, on $P$. We
argue that the expected number of blocks in the output partition that intersect
$B$ is $\Omega(\frac{\sqrt{n}}{\log n})$, which is asymptotically larger than
$1 + c \frac{r}{\diam(P)} \log n = O(1)$. 

Fix $1 \leq i \leq b$. We observe that in the output partition, $u_i$ belongs 
to the cluster of some point in $\{u, u_1, u_2, \ldots, u_b\}$ or of some point
in $V_i$. Call $u_i$ {\em isolated} if $u_i$ belongs to the cluster of some point in $V_i$. If $u_i$ is isolated, the cluster containing $u_i$ does not contain
any of the other $u_j$. Thus, the number of blocks of the output partition that
intersect $B$ is at least the number of vertices in $\{u_1, u_2, \ldots, u_b \}$that are isolated.

Note that $u_i$ is isolated if the following two events occur: (a) $\beta
\in [\frac{\delta}{4} - \frac{\delta}{16 \log n}, \frac{\delta}{4}]$; (b) some vertex in
$V_i$ appears in the random permutation $\pi_1, \pi_2, \ldots, \pi_p$ before
all vertices in $\{u, u_1, \ldots, u_b\}$. The probability of these two events
occuring is $\Omega(\frac{1}{\log n})$. It follows that the expected number
of isolated vertices, and thus the expected number of blocks in the partition
that intersect $B$, is $\Omega(\frac{b}{\log n}) = \Omega(\frac{\sqrt{n}}{\log n})$.   

\subsubsection{Probability Distribution}
Before describing our partitioning algorithm, we consider a probability distribution that it uses. Given a positive real $\delta$ and an integer $k \geq 2$, the distribution is denoted by dist$(\delta,k)$. The probability density function (pdf) $f$ of dist$(\delta,k)$ is the following:

$$f(x) =
\left\{
	\begin{array}{ll}
		0  & \mbox{if } x < \delta/8 \text{ and } x > \delta/4\\
		\frac{8\log k}{\delta} \frac{1}{2^{i}} & \mbox{if } \frac{\delta}{8}+(i-1)\frac{\delta}{8\log k} \leq x < \frac{\delta}{8}+i\frac{\delta}{8\log k} \text{ for } 1\leq i\leq \log k-1\\
		\frac{8\log k}{\delta} \frac{2}{k} & \mbox{if } \frac{\delta}{4}-\frac{\delta}{8\log k} \leq x \leq \frac{\delta}{4}
	\end{array}
\right.$$

The following observation shows that $f$ is indeed a density function.

\begin{obs}
$f$ is a probability density function. 
\label{obs:density}
\end{obs}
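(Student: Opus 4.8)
The plan is to verify the two defining properties of a probability density function: that $f$ is nonnegative everywhere, and that $\int_{-\infty}^{\infty} f(x)\,dx = 1$.

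Nonnegativity is immediate. Outside the interval $[\delta/8,\delta/4]$ we have $f = 0$ (reading the first case as ``$0$ if $x < \delta/8$ or $x > \delta/4$''), and on each of the remaining two pieces $f$ is a product of strictly positive quantities, since $\delta, \log k, k > 0$ and $2^{-i} > 0$. Hence $f \ge 0$ throughout.

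For the normalization, I would first observe that $f$ is supported on $[\delta/8,\delta/4]$, an interval of length $\delta/8$, which is partitioned into $\log k$ consecutive subintervals each of length $\frac{\delta}{8\log k}$: for $1 \le i \le \log k - 1$ the subinterval $I_i = \bigl[\frac{\delta}{8} + (i-1)\frac{\delta}{8\log k},\ \frac{\delta}{8} + i\frac{\delta}{8\log k}\bigr)$, on which $f$ takes the constant value $\frac{8\log k}{\delta}\cdot\frac{1}{2^i}$, together with the last subinterval $I_{\log k} = \bigl[\frac{\delta}{4} - \frac{\delta}{8\log k},\ \frac{\delta}{4}\bigr]$, on which $f$ takes the constant value $\frac{8\log k}{\delta}\cdot\frac{2}{k}$. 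Integrating this piecewise-constant function over each subinterval, the length $\frac{\delta}{8\log k}$ cancels the common factor $\frac{8\log k}{\delta}$ in the height, giving
\[
  \int_{-\infty}^{\infty} f(x)\,dx \;=\; \sum_{i=1}^{\log k - 1} \frac{1}{2^i} \;+\; \frac{2}{k}.
\]
I would then evaluate the geometric sum using $\sum_{i=1}^{m} 2^{-i} = 1 - 2^{-m}$ with $m = \log k - 1$, and note that $2^{-(\log k - 1)} = 2/k$, so the right-hand side equals $\bigl(1 - \tfrac{2}{k}\bigr) + \tfrac{2}{k} = 1$.

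There is no genuinely hard step here; the computation is routine. The one point worth flagging is the role of the doubled height $\frac{2}{k}$ on the last subinterval $I_{\log k}$ (in place of the value $\frac{1}{2^{\log k}} = \frac{1}{k}$ that the geometric pattern would suggest): this doubling is exactly what compensates for truncating the geometric series after $\log k - 1$ terms, making the total integral precisely $1$. If $\log k$ is not assumed to be an integer, one should interpret the number of subintervals as the appropriate integer; and the boundary case $k = 2$ (so $\log k = 1$) is covered by the same computation, with the sum over $1 \le i \le \log k - 1$ being empty and the single interval $[\delta/8,\delta/4]$ contributing $\frac{\delta}{8}\cdot\frac{8}{\delta}\cdot\frac{2}{2} = 1$.
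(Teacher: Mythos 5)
Your proof is correct and follows essentially the same route as the paper's: verify nonnegativity and integrate the piecewise-constant density over the $\log k$ subintervals, with the truncated geometric sum plus the $2/k$ mass on the last subinterval summing to $1$. The extra remarks on the boundary case $k=2$ and the role of the doubled last height are fine but not needed.
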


\begin{proof}
It is sufficient to show that $f$ satisfies the two properties of density function. As $\delta$ and $k$ are nonnegative it is easy to see that $f(x) \geq 0$ for $x \in (-\infty,+\infty)$. Also,
$$
\begin{aligned}
\int_{-\infty}^{\infty} f(x)dx & =\int_{\frac{\delta}{8}}^{\frac{\delta}{8}+\frac{\delta}{8\log k}} \frac{8\log k}{\delta} \frac{1}{2}dx +\int_{\frac{\delta}{8}+\frac{\delta}{8\log k}}^{\frac{\delta}{8}+\frac{2\delta}{8\log k}} \frac{8\log k}{\delta} \frac{1}{2^{2}}dx\\ & +\ldots +\int_{\frac{\delta}{8}+\frac{(\log k-2)\delta}{8\log k}}^{\frac{\delta}{8}+\frac{(\log k-1)\delta}{8\log k}} \frac{8\log k}{\delta} \frac{1}{2^{\log k -1}}dx+\int_{\frac{\delta}{4}-\frac{\delta}{8\log k}}^{\frac{\delta}{4}} \frac{8\log k}{\delta} \frac{2}{k}dx\\
& =\frac{1}{2} \sum_{i=0}^{\log k -2} \frac{1}{2^{i}}+ \frac{2}{k} = 1-{(\frac{1}{2})}^{\log k -1}+ \frac{2}{k}=1
\end{aligned}$$
\end{proof}

Consider the interval $[\frac{\delta}{8},\frac{\delta}{4}]$. Now divide the interval into $\log k$ subintervals of equal length. The $i^{th}$ interval for $1\leq i\leq \log k -1$ is defined as $[\frac{\delta}{8}+(i-1)\frac{\delta}{8\log k},\frac{\delta}{8}+i\frac{\delta}{8\log k})$. The last interval is $[\frac{\delta}{4}-\frac{\delta}{8\log k}, \frac{\delta}{4}]$. Denote the $j^{th}$ interval by $I_j$ for $1\leq j\leq \log k$. 

To sample a $\beta$ according dist$(\delta,k)$, we first pick one of these intervals from the distribution that assigns a probability of $1/2^j$ to $I_j$ for
$1 \leq j \leq \log k - 1$, and a probability of $2/k$ to $I_{\log k}$. Having
picked an interval $I_j$, we generate $\beta$ uniformly at random from it.

Now we discuss a vital property of the distribution dist$(\delta,k)$ which we use in the analysis of the partitioning algorithm. For an event $E$, let Pr$[E]$ denotes the probability that $E$ occurs. 
Now consider the following random process. We sample a value $p$ from dist$(\delta,k)$. Let $E_j$ denotes the event that $p\in I_j$. Then we have the following observation.

\begin{obs}\label{obs:sumeq}
Pr$[E_j] = \sum_{i=j+1}^{\log k} Pr[E_i] \ $ for $1\leq j\leq \log k -1$.
\end{obs}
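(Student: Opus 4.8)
The plan is to evaluate both sides of the claimed identity directly from the way the interval containing $\beta$ is chosen. Recall that, by construction, $\Pr[E_i] = 1/2^i$ for $1 \le i \le \log k - 1$ and $\Pr[E_{\log k}] = 2/k$. Substituting these into the right-hand side of the claim gives
$$\sum_{i=j+1}^{\log k} \Pr[E_i] = \left(\sum_{i=j+1}^{\log k - 1} \frac{1}{2^i}\right) + \frac{2}{k},$$
so the whole statement reduces to a single elementary identity about this truncated geometric series.

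First I would evaluate the geometric sum: for $1 \le j \le \log k - 1$ we have $\sum_{i=j+1}^{\log k - 1} 2^{-i} = 2^{-j} - 2^{-(\log k - 1)}$ (with the empty-sum convention handling the boundary case $j = \log k - 1$, where both sides are $0$). Next I would observe the key cancellation: $2^{-(\log k - 1)} = 2 \cdot 2^{-\log k} = 2/k$, so the $-2^{-(\log k - 1)}$ term exactly cancels the extra $+2/k$ coming from the last interval. Hence the right-hand side collapses to $2^{-j} = \Pr[E_j]$, which is the left-hand side, and the observation follows.

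I do not expect any real obstacle here; the identity is just the finite-sum analogue of $2^{-j} = \sum_{i > j} 2^{-i}$, made exact by the fact that the partitioning distribution was deliberately designed so that the two final ``dyadic'' weights $2^{-(\log k - 1)}$ and $2/k$ coincide. The only point requiring a word of care is the boundary value $j = \log k - 1$, for which the sum on the right is empty and the claim reads $\Pr[E_{\log k - 1}] = \Pr[E_{\log k}]$, i.e. $2^{-(\log k - 1)} = 2/k$, which again holds by the same computation.
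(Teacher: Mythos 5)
Your proof is correct and follows essentially the same route as the paper: both arguments reduce the claim to the explicit values $\Pr[E_i]=2^{-i}$ (for $i\le \log k-1$) and $\Pr[E_{\log k}]=2/k$, and then verify the identity via the finite geometric series together with the cancellation $2^{-(\log k-1)}=2/k$ (you collapse the right-hand sum, the paper expands the left-hand term, which is the same computation read in the opposite direction). Your handling of the boundary case $j=\log k-1$ is also fine.
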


\begin{proof}
The proof follows from the definition of the pdf of dist$(\delta,k)$. 
$$
\begin{aligned}
\text{Pr}[E_j]& =\int_{\frac{\delta}{8}+(j-1)\frac{\delta}{8\log k}}^{\frac{\delta}{8}+j\frac{\delta}{8\log k}} \frac{8\log k}{\delta} \frac{1}{2^{j}}dx = \frac{1}{2^{j}} = \frac{1}{2^{j}} -\frac{2}{k} + \frac{2}{k} = \frac{1}{2^{j}}(1-\frac{2^{j+1}}{k})+ \frac{2}{k}\\ & = \frac{1}{2^{j}}(1-\frac{1}{2^{\log k -j-1}})+ \frac{2}{k}=\frac{1}{2^{j+1}}\frac{(1-\frac{1}{2^{\log k -j-1}})}{\frac{1}{2}}+ \frac{2}{k}\\ &=\frac{1}{2^{j+1}}(\frac{1}{2^0}+\frac{1}{2^1}+\ldots+\frac{1}{2^{\log k -j-2}})+ \frac{2}{k}\\ &=\frac{1}{2^{j+1}}+\ldots+\frac{1}{2^{\log k -1}}+ \frac{2}{k}\\
& =\text{Pr}[E_{j+1}]+\ldots+\text{Pr}[E_{\log k -1}]+\text{Pr}[E_{\log k}]=\sum_{i=j+1}^{\log k} Pr[E_i]
\end{aligned}
$$\end{proof}

\subsubsection{Partitioning Algorithm}
Now we move on to the partitioning algorithm, which we recall, is given $Z$, 
a metric $d$ on $Z$, $P \subseteq Z$, and a parameter $n \geq |P|$. The procedure RAND-PARTITION$(P)$ described below (as Algorithm \ref{alg:partition}) takes a point set $P\subseteq Z$ as input and outputs a partition with at most $|P|$ subsets. Suppose that $P=\{p_1,\ldots,p_{|P|}\}$. The algorithm then generates 
$P_1, P_2, \ldots, P_{|P|}$ in order via the for loop. Suppose that $P_1, P_2, \ldots, P_{i-1}$ have already been constructed, and $Q = P \setminus (P_1 \cup P_2 \cup \cdots \cup P_{i-1})$. To construct $P_i$, the procedure samples a $\beta_i$ from dist$(\text{diam}(P),n)$. The choice of $\beta_i$ is done independently of the
choices of the other points. Then $P_i$ is set to $\{x \in Q \ | \ d(x,p_i) \leq \beta_i\}$. Note that this is done in the $i$'th iteration of the for loop. 
(Note that $p_i$ might not be assigned to $P_i$, as it could already be assigned to some other subset.)


\begin{algorithm}[hbt]
    \caption{RAND-PARTITION$(P)$}
    \label{alg:partition}
    \begin{algorithmic}[1]
        \REQUIRE A subset $P=\{p_1,\ldots,p_{|P|}\} \subseteq Z$        
        \ENSURE  A partition of $P$     
        \STATE $p \leftarrow |P|$
        \STATE $Q \leftarrow P$
        \FOR {$i = 1$ to $p$}
	    \STATE sample a $\beta_i$ from dist$(\text{diam}(P),n)$ corresponding to $p_i$
            \STATE $P_i \leftarrow \{x\in Q|d(x,p_i) \leq \beta_i \}$	
	    \STATE $Q \leftarrow Q\setminus P_i$
        \ENDFOR
        \RETURN $\{P_i | P_i \neq \emptyset \text{ and } 1\leq i\leq p\}$
    \end{algorithmic}
\end{algorithm}


We show that RAND-PARTITION$(P)$ satisfies the two guarantees mentioned before. To see that the first guarantee, note that the $\beta$ values are chosen from the distribution dist$(\text{diam}(P),n)$ which ensures that $\beta_i \leq \text{diam}(P)/4$ for $1\leq i\leq p$. Now each point in a subset $P_i$ is at a distance at most $\beta_i$ from $p_i$. Thus by triangle inequality $\text{diam}(P_i)\leq \text{diam}(P)/2$. In the next lemma we will show that the second guarantee also 
holds. Before that we have a definition.

We say that $P_i$ non-terminally (resp. terminally) intersects $B(y,r)$ if $P_i$ intersects $B(y,r)$ and it is not (resp. it is) the last set in the sequence  $P_1, P_2, \ldots, P_p$ that intersects $B(y,r)$. 

\begin{lemma}\label{lem:part}
(Partitioning Lemma) There is a constant $c$ such that for any ball $B(y,r)$ with $r \leq \frac{\text{diam}(P)}{{16\log n}}$, the expected number
of blocks in the output of \text{RAND-PARTITION}$(P)$ that intersect $B(y,r)$  
is at most $1 + c\frac{r}{\text{diam}(P)} \log n$. Moreover, the expected number of blocks that intersect the ball non-terminally is at most $c\frac{r}{\text{diam}(P)} \log n$.
\end{lemma}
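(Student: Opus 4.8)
The plan is to bound the expected number of blocks that \emph{non-terminally} intersect $B=B(y,r)$; the bound on the expected number of blocks that intersect $B$ then follows immediately, since at most one block — the last one in the sequence $P_1,\dots,P_p$ to meet $B$ — intersects $B$ terminally, so the number of blocks meeting $B$ is at most one more than the number meeting it non-terminally. Throughout, write $\delta=\diam(P)$ and $d_i=d(p_i,y)$.

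The first step is a deterministic structural observation about a single run of \text{RAND-PARTITION}. If $P_i$ non-terminally intersects $B$, then necessarily (a) $\beta_i\in[d_i-r,\ d_i+r)$ — the lower bound because some point of $B$ is placed into $P_i$ (apply the triangle inequality), and the upper bound because some point of $B$ is still unassigned after step $i$ — and (b) $\beta_j<d_j+r$ for every $j<i$, because if $\beta_j\ge d_j+r$ then $B\subseteq B(p_j,\beta_j)$, so every still-unremoved point of $B$ is absorbed into $P_j$, leaving nothing of $B$ for any later block, in particular for $P_i$. Since the $\beta_i$ are mutually independent, (a) and (b) yield the factorised bound that the probability $P_i$ non-terminally intersects $B$ is at most $s_i\prod_{j<i}(1-e_j)$, where $s_i:=\mathrm{Pr}[\beta_i\in[d_i-r,d_i+r)]$ and $e_i:=\mathrm{Pr}[\beta_i\ge d_i+r]$.

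The second step compares $s_i$ with $e_i$ using the step structure of $\mathrm{dist}(\delta,n)$ and Observation~\ref{obs:sumeq}. The hypothesis $r\le\delta/(16\log n)$ makes the window $[d_i-r,d_i+r)$ no longer than one of the equal-length subintervals $I_1,\dots,I_{\log n}$ of $[\delta/8,\delta/4]$, so it meets at most two consecutive subintervals, whence $s_i$ is at most $2r$ times the largest density on those two subintervals (the densities are non-increasing in the subinterval index). Observation~\ref{obs:sumeq} gives the clean identity $\mathrm{Pr}[\beta_i>\sup I_a]=\mathrm{Pr}[\beta_i\in I_a]=1/2^a$ for $a\le\log n-1$. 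From this one gets: for every $p_i$ whose window stays inside $I_1\cup\dots\cup I_{\log n-1}$, if $d_i+r\in I_a$ then $s_i\le 2r\cdot\frac{8\log n}{\delta}\cdot\frac1{2^{a-1}}=\frac{2\gamma}{2^a}$ while $e_i\ge \mathrm{Pr}[\beta_i>\sup I_a]=\frac1{2^a}$, so $s_i\le 2\gamma e_i$, where $\gamma:=16(r/\delta)\log n\ (\le1)$. For the remaining $p_i$, whose window reaches into $I_{\log n-1}\cup I_{\log n}$ (where $e_i$ may be $0$), both those subintervals have density $\frac{8\log n}{\delta}\cdot\frac2n$, so $s_i\le \frac{2\gamma}{n}$; there are at most $|P|\le n$ such points, contributing at most $2\gamma$ in total.

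The third step is the summation. Summing the bound from step one over the first group of indices and using $s_i\le 2\gamma e_i$ there, then extending the sum to all indices (only adding non‑negative terms) and telescoping via $\sum_i e_i\prod_{j<i}(1-e_j)=1-\prod_i(1-e_i)\le 1$, gives a contribution of at most $2\gamma$; the second group contributes at most $\sum 2\gamma/n\le 2\gamma$; altogether the expected number of non-terminally intersecting blocks is at most $4\gamma=64\,(r/\delta)\log n$, and adding $1$ gives the bound on the expected number of intersecting blocks. The point I expect to matter most is step one(b) together with independence: without it one is left with the naive bound $\sum_i s_i$, which can be as large as $\Omega\!\big(n\cdot(r/\delta)\log n\big)$ (precisely the behaviour of the configuration in the \cite{FRT} example discussed above). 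Thus the argument genuinely depends on the $\beta_i$ being independent — exactly the feature distinguishing this scheme from that of \cite{FRT} — and the case analysis in step two, while it must treat the first subinterval and the boundary near $\delta/4$ with a little care, is short once Observation~\ref{obs:sumeq} is in hand.
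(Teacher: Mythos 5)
Your proof is correct and follows essentially the same route as the paper's: reduce to the expected number of non-terminal intersections, observe that a non-terminal intersection forces $\beta_i$ into a window of length at most $2r$ (which meets at most two of the subintervals $I_1,\dots,I_{\log n}$), split indices according to whether that window sits in the geometric part or in the flat tail of dist$(\delta,n)$, and use Observation~\ref{obs:sumeq} to charge the window probability against a tail probability whose total contribution over all $i$ is at most $1$. The only differences are cosmetic: the paper conditions on the history $P_1,\dots,P_{i-1}$ and charges non-terminal against terminal intersection (whose probabilities sum to exactly $1$), whereas you exploit independence to factor the necessary event as $s_i\prod_{j<i}(1-e_j)$ and telescope via $\sum_i e_i\prod_{j<i}(1-e_j)\le 1$; and your two groups as literally stated omit windows dipping below $\delta/8$, but, as you anticipated, those indices also satisfy $s_i\le 2\gamma e_i$ (either $s_i=0$ or $d_i+r\in I_1$ with $e_i\ge 1/2$), so the bound goes through unchanged.
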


The intuition for the lemma is as follows. Consider the beginning of the $i$'th 
iteration of the for loop and assume that ball $B(y,r)$ is not fully contained
in the union of the previously constructed blocks $P_1, \ldots, P_{i-1}$. Then,
considering the choice of $\beta_i$, the probability that $B(p_i,\beta_i)$ 
fully contains the ball $B(y,r)$ is nearly as large as the probability that
$B(p_i, \beta_i)$ intersects $B(y,r)$. If $B(p_i,\beta_i)$ 
fully contains the ball $B(y,r)$, then of course none of the blocks $P_{i+1}, P_{i+2}, \ldots, P_p$ intersect $B(y,r)$. We now proceed to the proof.
   
\begin{proof}
For a point $x \in Z$ and subset $Q \subseteq Z$, let $\dmin(x,Q) = \min_{q \in Q} d(x,q)$ and $\dmax(x,Q) = \max_{q \in Q} d(x,q)$. Fix the ball $B(y,r)$ with
$r \leq \frac{\text{diam}(P)}{{16\log n}}$.
For each $1\leq i\leq p$, consider the indicator random variable $T_i$ defined as follows:
$$T_i =
\left\{
	\begin{array}{ll}
		1  & \mbox{if } P_i \text{ intersects } B(y,r)\\
		0 & \mbox{otherwise} 
	\end{array}
\right.$$
Let the random variable $T=\sum_{i=1}^p T_i$ be the number of subsets that the ball intersects. Then $E[T]=\sum_{i=1}^p E[T_i]= \sum_{i=1}^p Pr[P_i \text{ intersects } B(y,r)]$.

Clearly,
there is at most one $P_i$ that is the last one that intersects $B(y,r)$. Thus,
\[ \sum_{i=1}^p  Pr[P_i \text{ intersects } B(y,r)] \leq 1 + 
   \sum_{i=1}^p Pr[P_i \text{ non-terminally intersects } B(y,r)]. \]

Let $x_i = \dmin(p_i,B(y,r))$ and $y_i = \dmax(p_i, B(y,r))$. By the
triangle inequality, $y_i - x_i \leq 2r$. Denote by $(S^i)$ the event 
that $\beta_i$ lands in the interval $[x_i,y_i]$.  
Note that for $P_i$ to non-terminally intersect $B(y,r)$, the event $(S^i)$
must occur. Thus, if the interval $[x_i, y_i]$ does not intersect the interval
$[\frac{\text{diam}(P)}{8}, \frac{\text{diam}(P)}{4}]$, then
$Pr[P_i \text{ non-terminally intersects } B(y,r)] = 0$.

We therefore turn to the case where $[x_i, y_i]$ does intersect the interval
$[\frac{\text{diam}(P)}{8}, \frac{\text{diam}(P)}{4}]$. Recall that in
defining the probability distribution dist$(diam(P),n)$, we have divided
the latter interval into $\log n$ subintervals $I_1, I_2, \ldots, I_{\log n}$ 
of equal length. Denote by $a_l$ the probability 
\[ Pr[\text{ a random sample drawn from dist}(\text{diam}(P),n) \text{ belongs to } I_{l}].\] For convenience,
define $I_{\log n + 1} = [\text{diam}(P)/4, \infty)$ and $a_{\log n + 1} = 0$. 

Let $I_{l_i}$ be the subinterval that contains $x_i$. (In case $x_i < \frac{\text{diam}(P)}{8}$, let $l_i = 1$.) The length of $[x_i,y_i]$ is at most $2r$, $2r \leq \text{diam}(P)/{8\log n}$, and the length
of each of the subintervals is $\diam(P)/8 \log n$. Thus
$[x_i,y_i]$ can intersect at most one more subinterval, and this is  $I_{l_i + 1}$.
Let $r_1$ and $r_2$ be the length of $I_{l_i} \cap [x_i,y_i]$ and $I_{{l_i}+1} \cap [x_i,y_i]$ respectively. Note that $r_1+r_2 \leq y_i-x_i\leq 2r$.




To bound $Pr[P_i \text{ non-terminally intersects } B(y,r)]$, we now have
two cases. We say that $p_i$ is {\em far} (from the ball $B(y,r)$) if
$l_i \in \{\log n -1, \log n\}$. We say that $p_i$ is {\em near} 
if $1 \leq l_i \leq \log n - 2$.
\\\\\textbf{Case 1: $p_i$ is far.} In this case $a_{l_i}, a_{l_i + 1} \leq
\frac{2}{n}.$ Thus
\begin{eqnarray*}
Pr[S^i]& \leq & Pr[\beta_i \text{ lands in } I_{l_i} \cap [x_i,y_i]]+Pr[\beta_i \text{ lands in } I_{{l_i}+1} \cap [x_i,y_i]]\\
& \leq &  \frac{r_1}{\text{diam}(P)/{8\log n}} a_{l_i} + \frac{r_2}{\text{diam}(P)/{8\log n}} a_{l_i + 1}  \\
& \leq & \frac{2r}{\text{diam}(P)/{8\log n}} \cdot \frac{2}{n} \\
& = & \frac{32 r \log n}{n \cdot \text{diam}(P)}.
\end{eqnarray*}

Thus, $Pr[P_i \text{ non-terminally intersects } B(y,r)] \leq Pr[S^i] \leq
\frac{32 r \log n}{n \cdot \text{diam}(P)}.$
\\\\\textbf{Case 2: $p_i$ is near.}  For such a $p_i$ we have the following
crucial observation. 

\begin{claim}
$Pr[P_i \text{ non-terminally intersects } B(y,r)] \leq \frac{32 r \log n}{\diam(P)} 
Pr[P_i \text{ terminally intersects } B(y,r)]$.
\end{claim}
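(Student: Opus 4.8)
The plan is to condition on the algorithm's state at the start of iteration $i$, reduce the statement to a one-dimensional estimate about the single random value $\beta_i$, and derive that estimate from Observation~\ref{obs:sumeq}.

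\textbf{Conditioning and sandwiching.} Fix an outcome of $\beta_1,\dots,\beta_{i-1}$; this determines the residual set $Q = P\setminus(P_1\cup\cdots\cup P_{i-1})$ with which iteration $i$ begins. Recall that $\beta_i$ is drawn independently of this outcome and that $x_i=\dmin(p_i,B(y,r))$ and $y_i=\dmax(p_i,B(y,r))$ depend only on the (fixed) ball and on $p_i$, not on $Q$. If $Q\cap B(y,r)=\emptyset$ then $P_i\subseteq Q$ misses $B(y,r)$ entirely, so conditionally both probabilities in the claim are $0$. Otherwise, set $x_i' = \dmin(p_i,Q\cap B(y,r))\ge x_i$ and $y_i' = \dmax(p_i,Q\cap B(y,r))\le y_i$. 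Since the algorithm partitions all of $P$, one checks that, \emph{conditionally on this outcome}, $P_i$ non-terminally intersects $B(y,r)$ exactly when $\beta_i\in[x_i',y_i')$ (it must meet $B(y,r)$ but not swallow all of $Q\cap B(y,r)$), and $P_i$ terminally intersects $B(y,r)$ exactly when $\beta_i\ge y_i'$. Hence, using $[x_i',y_i')\subseteq[x_i,y_i)$ and the independence of $\beta_i$ from the past, conditionally $\Pr[P_i\text{ non-terminally intersects }B(y,r)]\le\Pr[\beta_i\in[x_i,y_i)]$ and $\Pr[\beta_i\ge y_i]\le\Pr[P_i\text{ terminally intersects }B(y,r)]$. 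Averaging over the outcomes of $\beta_1,\dots,\beta_{i-1}$ (the claim being vacuous when $Q\cap B(y,r)=\emptyset$), it remains to prove the one-dimensional estimate $\Pr[\beta_i\in[x_i,y_i)]\le\frac{32 r\log n}{\diam(P)}\,\Pr[\beta_i\ge y_i]$.

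\textbf{The one-dimensional estimate.} Write $\delta=\diam(P)$ and let $R_j$ be the right endpoint of $I_j$, so consecutive subintervals have length $\delta/(8\log n)$ and $R_j=L_{j+1}$; recall $a_j=\Pr[\beta_i\in I_j]=2^{-j}$ for $j\le\log n-1$, $a_{\log n}=2/n$, and that $\beta_i$ is uniform on each $I_j$. By Observation~\ref{obs:sumeq}, $\Pr[\beta_i\ge R_j]=\sum_{l=j+1}^{\log n}a_l=a_j$ for $1\le j\le\log n-1$. Because $p_i$ is near, $l_i\le\log n-2$; since $x_i<R_{l_i}$ while $y_i-x_i\le 2r\le\delta/(8\log n)$ is one subinterval length, $y_i<R_{l_i+1}$, and as $l_i+1\le\log n-1$ this forces $y_i<\delta/4$, so $y_i$ lies in $I_{l_i}\cup I_{l_i+1}$. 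With $r_1,r_2$ the lengths of $I_{l_i}\cap[x_i,y_i]$ and $I_{l_i+1}\cap[x_i,y_i]$ (so $r_1+r_2\le y_i-x_i\le 2r$, any portion of $[x_i,y_i)$ below $\delta/8$ contributing nothing since $f\equiv0$ there), uniformity gives
\[
\Pr[\beta_i\in[x_i,y_i)]=\tfrac{8\log n}{\delta}\bigl(r_1 a_{l_i}+r_2 a_{l_i+1}\bigr)\le \tfrac{8\log n}{\delta}(r_1+r_2)\,a_{l_i}\le\tfrac{16 r\log n}{\delta}\,a_{l_i}.
\]
If $y_i\le R_{l_i}$ then $\Pr[\beta_i\ge y_i]\ge\Pr[\beta_i\ge R_{l_i}]=a_{l_i}$, giving ratio at most $16 r\log n/\delta$. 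If $R_{l_i}<y_i<R_{l_i+1}$ then $\Pr[\beta_i\ge y_i]\ge\Pr[\beta_i\ge R_{l_i+1}]=a_{l_i+1}=a_{l_i}/2$, where $a_{l_i+1}=a_{l_i}/2$ uses $l_i+1\le\log n-1$ (this is exactly where ``near'' is needed), so the ratio is at most $32 r\log n/\delta$. In either case $\Pr[\beta_i\in[x_i,y_i)]\le\frac{32 r\log n}{\delta}\Pr[\beta_i\ge y_i]$, as wanted.

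\textbf{Main obstacle.} The one-dimensional computation is routine once Observation~\ref{obs:sumeq} is available; the delicate step is the sandwiching — checking that ``$P_i$ non-terminally intersects $B(y,r)$'' is trapped between two events involving \emph{only} $\beta_i$ and the radii $x_i\le y_i$ that are fixed before iteration $i$, and that the single event $\{\beta_i\ge y_i\}$ already forces terminal intersection — since only then may the independence of $\beta_i$ from the residual set $Q$ be used to discharge the conditioning and pass to unconditional probabilities.
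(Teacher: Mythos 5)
Your proof is correct and follows essentially the same route as the paper's: condition on the history $\beta_1,\dots,\beta_{i-1}$, use the independence of $\beta_i$ to reduce to comparing $\Pr[\beta_i\in[x_i,y_i)]$ against the tail $\Pr[\beta_i\ge y_i]$, and invoke Observation~\ref{obs:sumeq} to show the tail mass beyond $I_{l_i+1}$ is within a constant factor of the mass of $I_{l_i}\cup I_{l_i+1}$. Your explicit sandwich via $x_i',y_i'$ and the two-case analysis on the location of $y_i$ are just a more detailed rendering of the paper's bound $\Pr[\text{terminal}]\ge a_{l_i+2}+\cdots+a_{\log n}=a_{l_i+1}$, so there is nothing substantively different to flag.
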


\begin{proof} 
Suppose that $P_1, P_2, \ldots, P_{i-1}$ have been chosen and $B(y,r) 
\subseteq P_1 \cup P_2 \cup \cdots \cup P_{i-1}$. Conditioned on such
a history, we have $Pr[P_i \text{ non-terminally intersects } B(y,r)]
= Pr[P_i \text{ terminally intersects } B(y,r)] = 0$ and the
claimed inequality holds.

Now suppose that $B(y,r) 
\setminus (P_1 \cup P_2 \cup \cdots \cup P_{i-1}) \neq \emptyset$. Let
us condition on such a history. Then,  $P_i$ terminally intersects
$B(y,r)$ if $\beta_i$ lands in $I_{l_i + 2} \cup I_{l_i + 3} \cup \cdots
\cup I_{\log n}$. Thus, using Observation \ref{obs:sumeq},

\[Pr[P_i \text{ terminally intersects } B(y,r)] \geq
a_{l_i + 2} + a_{l_i + 3} + \cdots + a_{\log n} = a_{l_i + 1}.\]

On the other hand, 
\begin{eqnarray*}
Pr[P_i \text{ non-terminally intersects } B(y,r)] & \leq &
Pr[S^i \ | \ \beta_i \in I_{l_i} \cup I_{l_i + 1}] \cdot Pr[\beta_i \in I_{l_i} \cup I_{l_i + 1}] \\
& \leq & \left( \frac{2}{3}\frac{r_1}{\diam(P)/8 \log n} +
                \frac{1}{3}\frac{r_2}{\diam(P)/8 \log n} \right) \cdot (a_{l_i}
                 + a_{l_i + 1}) \\
& \leq & \frac{32 \cdot r \log n}{3\cdot \diam(P)} \cdot 3 a_{l_i + 1} \\
& \leq & \frac{32 \cdot r \log n}{\diam(P)} \cdot Pr[P_i \text{ terminally intersects } B(y,r)].
\end{eqnarray*}
\end{proof}

Hence the expected number of subsets that intersect $B(y,r)$ non-terminally is
\begin{eqnarray*}
& \leq &  \sum_{i=1}^p Pr[P_i \text{ non-terminally intersects } B(y,r)]\\
     & \leq &  \sum_{i: p_i \text{ is far}}Pr[P_i \text{ non-terminally intersects } B(y,r)] \\
     &      & \hspace{0.2in} + \sum_{i:p_i \text{ is near}} Pr[P_i \text{ non-terminally intersects } B(y,r)] \\
     & \leq &  \frac{32 r \log n}{\text{diam}(P)} \sum_{i=1}^p \frac{1}{n} +   
              \frac{32r\log n}{\text{diam}(P)} \sum_{i=1}^p Pr[P_i \text{ terminally intersects } B(y,r)] \\
     & \leq &  c \frac{r}{\diam(P)} \log n.
\end{eqnarray*}
For the last inequality, we used the fact that $\sum_{i=1}^p Pr[P_i \text{ terminally intersects } B(y,r)] = 1$, since there is exactly one $P_i$ that terminally
intersects $B(y,r)$. Putting the two cases together, we have 
\[
 E[T] \leq 1+c \frac{r}{\diam(P)} \log n.
\]
%
%
%
\end{proof}

We conclude by summarizing the result.

\begin{theorem}
\label{thm:partitioning}
Let $Z$ be a point set with an associated metric $d$, let $P \subseteq Z$
be a point set with at least $2$ points, and $n \geq |P|$ be a parameter. There is a polynomial-time probabilistic algorithm RAND-PARTITION$(P)$ that partitions $P$ into blocks $\{P_1, P_2, \ldots, P_t\}$ and has the
following guarantees:
\begin{enumerate}
 \item For each $1 \leq i \leq t$, $diam(P_i) \leq$ $diam(P)/2$.
 
 \item There is a constant $c > 0$ so that for any ball $B$ (centered at some point in $Z$) of radius $r \leq \frac{diam(P)}{16 \log n}$,
the expected size of the set $\{ i | P_i \cap B \neq \emptyset \}$ is
at most $1 + c \frac{r}{diam(P)} \log n$ and the expected number of blocks that non-terminally intersect $B$ is at most $c\frac{r}{\text{diam}(P)} \log n$. 
\end{enumerate}
\end{theorem}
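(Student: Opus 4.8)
The plan is to observe that Theorem~\ref{thm:partitioning} is essentially a repackaging of the facts already established about RAND-PARTITION$(P)$, so the proof amounts to assembling three pieces: that the algorithm outputs a genuine partition, that the partition satisfies the diameter bound, and that it satisfies the two expectation bounds, plus a short remark on the running time. I would first check that RAND-PARTITION$(P)$ does output a partition of $P$. The blocks $P_1,\ldots,P_p$ are pairwise disjoint by construction, since each $P_i$ is carved out of the residual set $Q = P \setminus (P_1\cup\cdots\cup P_{i-1})$; and their union is all of $P$, because in the $i$-th iteration, if $p_i$ has not already been placed then $d(p_i,p_i)=0\le\beta_i$ forces $p_i\in P_i$, so every point of $P$ is assigned by the time its own index is processed. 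Discarding empty blocks yields a partition into $t\le|P|$ nonempty sets.

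For the diameter bound I would recall the argument given just before Lemma~\ref{lem:part}: every $\beta_i$ is drawn from dist$(\diam(P),n)$, whose support lies in $[\diam(P)/8,\diam(P)/4]$, so $\beta_i\le\diam(P)/4$; since each point of $P_i$ lies within $\beta_i$ of $p_i$, the triangle inequality gives $\diam(P_i)\le 2\beta_i\le\diam(P)/2$. The two expectation bounds — on the expected number of blocks intersecting a ball $B$ of radius $r\le\diam(P)/(16\log n)$, and on the expected number that non-terminally intersect it — are exactly the two conclusions of the Partitioning Lemma (Lemma~\ref{lem:part}), applied with $B$ in place of $B(y,r)$, whose hypothesis on $r$ matches verbatim. (As already noted in the text, the first bound can also be read off the second, since the count of blocks intersecting $B$ exceeds the count of blocks non-terminally intersecting $B$ by at most one.)

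Finally, for the running-time claim I would note that there are at most $|P|$ iterations, and each one requires only computing $\diam(P)$ (once, in $O(|P|^2)$ distance evaluations), drawing a single real from a piecewise-uniform distribution supported on $O(\log n)$ intervals, and comparing $|P|$ distances against $\beta_i$ — all polynomial in $|P|$ and in the bit-length of the metric. There is no real obstacle left to overcome here: all of the substantive work is contained in Lemma~\ref{lem:part}, which has already been proved. The only point that warrants a moment's care is the verification that the union of the blocks is indeed $P$, which, as noted above, follows from the inclusion $p_i\in P_i$ whenever $p_i$ survives to iteration $i$.
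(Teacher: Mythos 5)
Your proof is correct and follows essentially the same route as the paper: the diameter bound comes from $\beta_i \leq \diam(P)/4$ plus the triangle inequality, and both expectation bounds are exactly the conclusions of the Partitioning Lemma (Lemma~\ref{lem:part}), with the theorem serving only as a summary. Your added checks that the blocks genuinely partition $P$ (since $p_i$ joins $P_i$ if still unassigned) and that the algorithm runs in polynomial time are correct and harmless elaborations of points the paper leaves implicit.
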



\section{Algorithm for MCC}\label{sec:MCC}
We now describe our $(1+\epsilon)$-factor approximation algorithm for the MCC problem. Recall that we are given a set $X$ of clients, a set $Y$ of servers,
and a metric $d$ on $X \cup Y$. We wish to compute a cover for $X$ 
with minimum cost. Let $m=|Y|$ and $n=|X|$.

For $P \subseteq X$, let $\opt(P)$ denote some optimal cover for $P$. Denote
by $\cost(B)$ the cost of a ball $B$ (the $\alpha$-th power of B's radius)
and by $\cost(\mathcal{B})$ the cost $\sum_{B \in \mathcal{B}}\cost(B)$ of a set $\mathcal{B}$ of balls. 

To compute a cover for $P$, our algorithm first guesses the set $\Q \subseteq 
\opt(P)$ consisting of all the large balls in $\opt(P)$. As we note in the structure lemma below,
we may assume that the number of large balls in $\opt(P)$ is small. We then
use the algorithm of Theorem~\ref{thm:partitioning} to partition $P$ into
$\{P_1, P_2, \ldots, P_t\}$. For each $1 \leq i \leq t$, we recursively
compute a cover for the set $P'_i \subseteq P_i$ of points not covered 
by $\Q$. 

To obtain an approximation guarantee for this algorithm, we use the
guarantees of Theorem~\ref{thm:partitioning}. With this overview, we
proceed to the structure lemma and a complete description of the algorithm.

\subsection{A Structure Lemma}
It is not hard to show that for any $\gamma \geq 1$ and $P \subseteq X$ such
that $\diam(P)$ is at least a constant factor of $\diam(X\cup Y)$, $\opt(P)$ 
contains at most $(c/\gamma)^{\alpha}$ balls of radius at least $\diam(P)/\gamma$. Here $c$ is some absolute constant. The following structural lemma extends
this fact.
\begin{lemma}\label{lem:structure}
Let $P \subseteq X$, $0 < \lambda < 1$ and $\gamma \geq 1$,  and suppose that 
$\opt(P)$ does not contain any ball of radius greater than or equal to
$2 \alpha \cdot\diam(P)/\lambda$. Then the number of balls in $\opt(P)$ of radius 
greater than or equal to $\diam(P)/\gamma$ is at most $c(\lambda, \gamma) :=
(9 \alpha \gamma / \lambda)^{\alpha}$.
\end{lemma}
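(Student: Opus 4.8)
The plan is to bound the number of large balls in $\opt(P)$ by comparing against a cheap explicit cover, together with an averaging (volume-type) argument that exploits the superadditivity of $t \mapsto t^\alpha$. Let $\delta = \diam(P)$ and fix $p_0 \in P$; then $P \subseteq B(p_0, \delta)$, but to use a server-centered cover I would instead pick, for each point, a nearby server. Concretely, for each $x \in P$ let $s(x) \in Y$ be a server that covers $x$ in $\opt(P)$; then $d(x, s(x)) \le r_{\max}$ where $r_{\max}$ is the radius of the largest ball in $\opt(P)$, which by hypothesis is at most $2\alpha\delta/\lambda$. Hence every point of $P$ lies within $2\alpha\delta/\lambda$ of some server, and a single ball of radius at most $\delta + 2\alpha\delta/\lambda \le 3\alpha\delta/\lambda$ centered at such a server covers all of $P$. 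This gives an upper bound $\cost(\opt(P)) \le (3\alpha\delta/\lambda)^\alpha$ on the optimal cost. (One has to be slightly careful that the largest ball in $\opt(P)$ indeed has center in $Y$ and reaches a point of $P$; if $\opt(P)$ contains a useless ball with no point of $P$ we may discard it, so this is fine.)

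Next I would lower-bound the cost contributed by the large balls. Suppose $\opt(P)$ contains $N$ balls each of radius at least $\delta/\gamma$. Each such ball contributes at least $(\delta/\gamma)^\alpha$ to the cost, so
\[
 N \left(\frac{\delta}{\gamma}\right)^{\alpha} \le \cost(\opt(P)) \le \left(\frac{3\alpha\delta}{\lambda}\right)^{\alpha}.
\]
Rearranging gives $N \le (3\alpha\gamma/\lambda)^{\alpha}$, which is even a little stronger than the claimed $(9\alpha\gamma/\lambda)^{\alpha}$; the slack (the factor $9$ vs.\ $3$) presumably absorbs a looser estimate of the covering radius, e.g.\ if one prefers to route through two triangle-inequality hops, or to account for the largest ball's center being a server whose distance to $P$ must itself be bounded. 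I would state the clean version with constant $3$ and note it implies the lemma as written.

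The main obstacle — really the only subtle point — is establishing the single-ball (or $O(1)$-ball) cheap cover of $P$ with radius $O(\alpha\delta/\lambda)$, i.e.\ correctly chaining the bounds: $\diam(P) = \delta$, the existence of \emph{some} covering ball in $\opt(P)$ of radius at most $2\alpha\delta/\lambda$, and the triangle inequality to get from that ball's server to all of $P$. Everything after that is the one-line averaging inequality above using $(\delta/\gamma)^\alpha$ as a per-ball lower bound, which needs only $\alpha \ge 1$ and no superadditivity beyond "a sum of $N$ nonnegative terms each $\ge a$ is $\ge Na$." I do not expect the $\lambda, \gamma$ bookkeeping to cause any real difficulty; the constants are generous.
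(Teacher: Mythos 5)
Your proof is correct and follows essentially the same route as the paper: both bound $\cost(\opt(P))$ from above by a single server-centered ball of radius $O(\alpha\,\diam(P)/\lambda)$ containing $P$ (using the hypothesis on the largest ball plus the triangle inequality), and then divide by the per-ball lower bound $(\diam(P)/\gamma)^{\alpha}$. Your tighter chaining through a point of $P$ yields the constant $3$ where the paper, bounding the diameter of the union of all balls in $\opt(P)$, gets $9$; this is only a minor sharpening of the same argument.
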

\begin{proof}
Suppose that $\opt(P)$ does not contain any ball of radius greater than or 
equal to $2 \alpha \cdot\diam(P)/\lambda$. Note that each ball in $\opt(P)$ intersects
$P$ and has radius at most $2 \alpha \cdot\diam(P)/\lambda$. Thus the point  set
$\{ z \in X\cup Y \ | \ z \in B \mbox{ for some } B \in \opt(P)\}$ has diameter
at most $\diam(P) + 8 \alpha \cdot\diam(P)/\lambda \leq 9 \alpha \cdot\diam(P)/\lambda$. It
follows that there is a ball centered at a point in $Y$, with radius
at most $9 \alpha \cdot\diam(P)/\lambda$ that contains $P$.

Let $t$ denote the number of balls in $\opt(P)$ of radius 
greater than or equal to $\diam(P)/\gamma$. By optimality of $\opt(P)$,
we have $t \cdot (\diam(P)/\gamma)^{\alpha} \leq (9 \alpha \cdot\diam(P)/\lambda)^{\alpha}$.
Thus $t \leq (9 \alpha \gamma / \lambda)^{\alpha}$.
\end{proof}
\subsection{The Algorithm}\label{subsec:algo}
We may assume that the minimum distance between two points in $X$ is $1$. Let
$L = 1+\log (\diam(X))$. As we want a $(1 + \eps)$-approximation, we fix a 
parameter $\lambda = \eps/2L$. Let $\gamma = \frac{c \log n}{\lambda}$, where
$c$ is the constant in Theorem~\ref{thm:partitioning}. Denote $\mathcal{D}$ to be the set of balls such that each ball is centered at a point of $y\in Y$ and has radius $r=d(x,y)$ for some $x\in X$. We note that for any $P \subseteq X$, 
any ball in $\opt(P)$ must belong to this set. Note that $|\mathcal{D}|\leq mn$.
Recall that $c(\lambda, \gamma) = (9 \alpha \gamma / \lambda)^{\alpha}$.

With this terminology, the procedure POINT-COVER($P$) described as Algorithm~\ref{alg:cover} returns a cover of $P \subseteq X$. If $|P|$ is smaller than
some constant, then the procedure returns an optimal solution by searching
all covers with a constant number of balls. In the general case, one candidate
solution is the best single ball solution. For the other candidate solutions,
the procedure first computes a partition $\{P_1,\ldots,P_{\tau}\}$ of $P$, 
using the RAND-PARTITION$(P)$ procedure. Here RAND-PARTITION$(P)$ is
called with $Z = X \cup Y$ and $n = |X| \geq |P|$. Then it iterates over all possible subsets of $\mathcal{D}$ of size at most $c(\lambda,\gamma)$ containing balls of radius greater than $\diam(P)/\gamma$. For each such subset $\Q$ and
$1 \leq i \leq \tau$, it computes the set $P_i' \subseteq P_i$ of points not
covered by $\Q$. It then makes recursive calls and generates the candidate 
solution $\Q \cup \bigcup^{\tau}_{\text{i=1}} \mbox{POINT-COVER}(P'_{\textit{i}})$. Note that
all the candidate solutions are actually valid covers for $P$. Among
these candidate solutions the algorithm returns the best solution.  
\begin{algorithm}[h]
    \caption{POINT-COVER$(P)$}
    \label{alg:cover}
    \begin{algorithmic}[1]
        \REQUIRE A subset $P \subseteq X$.
        \ENSURE A cover of the points in $P$.
	\IF{$|P|$ is smaller than some constant $\kappa$}
	    \RETURN a minimum solution by checking all covers with at most $\kappa$ balls. 
	\ENDIF
        \STATE sol $\leftarrow$ the best cover with one ball
        \STATE cost $\leftarrow \cost(sol)$	
        \STATE Let $\{P_1,\ldots,P_{\tau}\}$ be the set of nonempty subsets returned by RAND-PARTITION$(P)$ 
        \STATE Let $\mathcal{B}$ be the set of balls in $\mathcal{D}$ having radius greater than $\frac{\diam(P)}{\gamma}$
        \FOR{\textbf{each} $\Q \subseteq \mathcal{B}$ of size at most $c(\lambda,\gamma)$}
	    \FOR {$i=1$ to $\tau$}
		\STATE Let $P'_i = \{p \in P_i \ | \ p \not\in \bigcup_{B \in \Q} B\}$ 
	    \ENDFOR
	    \STATE $\Q' \leftarrow \Q \cup \bigcup_{\text{i=1}}^{\tau}$ POINT-COVER$(P'_i)$
	    \IF{$\cost(\Q') <$ cost}
		\STATE cost $\leftarrow \cost(\Q')$
		\STATE sol $\leftarrow \Q'$
	    \ENDIF
	\ENDFOR
        \RETURN sol
    \end{algorithmic}
\end{algorithm}

Our overall algorithm for MCC calls the procedure POINT-COVER$(X)$ to get a 
cover of $X$. 

\subsection{Approximation Guarantee}
For $P \subseteq X$, let $\level(P)$ denote the smallest non-negative integer
$i$ such that $\diam(P) < 2^i$. As the minimum interpoint distance in $X$ is
$1$, $\level(P) = 0$ if and only if $|P| \leq 1$. Note that $\level(X) \leq L$.

The following lemma bounds the quality of the approximation of our algorithm.

\begin{lemma}\label{lem:cover}
POINT-COVER$(P)$ returns a solution whose expected cost is at most $(1+\lambda)^{l} \cost(\opt(P))$, where $l = \level(P)$. 
\end{lemma}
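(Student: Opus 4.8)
The plan is to prove Lemma~\ref{lem:cover} by induction on $l = \level(P)$.

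\textbf{Base case.} When $l$ is at most the threshold that makes $|P| \leq \kappa$ (or more precisely, when $\diam(P)$ is small enough that $|P|$ falls below the constant $\kappa$), POINT-COVER$(P)$ returns an exact optimum by brute force, so the bound holds trivially with factor $(1+\lambda)^l \geq 1$. I also need to observe that $\level(P) = 0$ forces $|P| \leq 1$, handled trivially.

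\textbf{Inductive step.} Assume the claim for all subsets of level at most $l-1$; take $P$ with $\level(P) = l$. The key structural input is Lemma~\ref{lem:structure}, invoked with the chosen $\lambda = \eps/2L$ and $\gamma = c\log n/\lambda$. First I would argue we may assume $\opt(P)$ contains no ball of radius $\geq 2\alpha \diam(P)/\lambda$: if it did, a single ball of radius $9\alpha\diam(P)/\lambda$ containing all of $P$ (as in the proof of Lemma~\ref{lem:structure}) would be cheaper, and the best-single-ball candidate already competes with that — so either the one-ball solution is already within the claimed bound, or the assumption of Lemma~\ref{lem:structure} is met. Under that assumption, the set $\Q^* \subseteq \opt(P)$ of balls of radius $> \diam(P)/\gamma$ has size at most $c(\lambda,\gamma)$, so it is one of the subsets $\Q$ enumerated in the \textbf{for} loop. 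Fix that iteration. The remaining balls $\opt(P) \setminus \Q^*$ all have radius $r \leq \diam(P)/\gamma = \lambda \diam(P)/(c\log n) \leq \diam(P)/(16\log n)$, so Theorem~\ref{thm:partitioning} applies to each of them with respect to the partition $\{P_1,\ldots,P_\tau\}$ produced by RAND-PARTITION$(P)$.

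\textbf{Charging and expectation.} For this fixed $\Q = \Q^*$, the algorithm's candidate cost is at most $\cost(\Q^*) + \sum_{i=1}^\tau \cost(\text{POINT-COVER}(P_i'))$, and by the inductive hypothesis (each $P_i'$ has $\diam \leq \diam(P)/2$, hence level $\leq l-1$) the expected cost of POINT-COVER$(P_i')$, conditioned on the partition, is at most $(1+\lambda)^{l-1}\cost(\opt(P_i'))$. Now I need a cover for $P_i'$: restrict each small ball $B \in \opt(P)\setminus \Q^*$ to $P_i \cap B$ — but a restricted ball isn't centered at a server. Instead, following the standard device, for each small ball $B$ and each block $P_i$ it intersects, include a ball $B_i$ concentric shifts: actually $B$ itself covers $P_i \cap B$, so $\{B \in \opt(P)\setminus\Q^* : B \cap P_i' \neq \emptyset\}$ is a valid cover of $P_i'$, giving $\cost(\opt(P_i')) \leq \sum_{B : B \cap P_i \neq \emptyset} \cost(B)$. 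Summing over $i$, a ball $B$ of radius $r$ contributes $\cost(B) = r^\alpha$ once for each block it intersects. Taking expectations over RAND-PARTITION and using Theorem~\ref{thm:partitioning}, the expected number of blocks intersecting $B$ is at most $1 + c\frac{r}{\diam(P)}\log n \leq 1 + c\frac{\diam(P)/\gamma}{\diam(P)}\log n = 1 + \lambda$. Hence
\[
\mathbb{E}\Big[\sum_{i=1}^\tau \cost(\opt(P_i'))\Big] \leq (1+\lambda)\sum_{B \in \opt(P)\setminus\Q^*}\cost(B) = (1+\lambda)\big(\cost(\opt(P)) - \cost(\Q^*)\big).
\]
Combining, the expected candidate cost for $\Q = \Q^*$ is at most $\cost(\Q^*) + (1+\lambda)^{l-1}(1+\lambda)(\cost(\opt(P))-\cost(\Q^*)) \leq (1+\lambda)^l \cost(\opt(P))$, and since POINT-COVER returns the best candidate, its expected cost is no larger. (One subtlety: the expectation over the partition and the conditional inductive expectation must be combined via the tower property, and the inductive bound must hold for \emph{every} realization of the partition, which it does since Lemma~\ref{lem:cover} is a statement about the algorithm's own internal randomness on each recursive input.)

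\textbf{Main obstacle.} The delicate point is the interaction of the two independent sources of randomness — the partition at the current level and the recursion's randomness — together with the fact that the "good" iteration $\Q = \Q^*$ is itself a random object (it depends on $\opt(P)$, which is fixed, but the bound "$\Q^*$ is enumerated" relies on the deterministic size bound from Lemma~\ref{lem:structure}). I would handle this by fixing $\Q^*$ first (deterministic given $P$), then taking expectations, and by noting the "best candidate" step only helps. A secondary obstacle is cleanly disposing of the large-ball case so that Lemma~\ref{lem:structure}'s hypothesis can be assumed without loss — I'd want the single-ball candidate bound $\cost(\text{one ball}) \leq (9\alpha/\lambda)^\alpha \diam(P)^\alpha$ and an argument that this is already within $(1+\lambda)^l$ of the optimum when $\opt(P)$ uses a ball that huge (since then $\cost(\opt(P)) \geq (2\alpha/\lambda)^\alpha\diam(P)^\alpha$, and the ratio of bounds is an absolute constant — so in fact one should check the constants line up, or alternatively phrase the lemma to only claim the bound when the hypothesis holds and handle the complementary case separately). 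I expect the write-up to route around this via the observation that $(9\alpha/\lambda)^\alpha / (2\alpha/\lambda)^\alpha = (9/2)^\alpha$, which is \emph{not} $\leq (1+\lambda)^l$ in general — so the actual argument must instead note that when $\opt(P)$ has such a large ball, POINT-COVER can afford to just pick the one enclosing ball and recurse trivially on nothing, so this case is genuinely easy and separate from the inductive charging.
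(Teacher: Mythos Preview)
Your inductive charging argument (what the paper calls Case~2) is correct and essentially identical to the paper's: fix $\Q=\Q^*$, use Theorem~\ref{thm:partitioning} to bound $E[\mu(B)]\le 1+\lambda$ for each small ball $B\in\opt(P)\setminus\Q^*$, and combine with the inductive hypothesis on the $P_i'$.

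The gap is exactly where you flag it: the ``large ball'' case. Your attempt uses a fresh ball of radius $9\alpha\,\diam(P)/\lambda$ and, as you correctly compute, this only gives a $(9/2)^\alpha$ ratio against the lower bound $(2\alpha\,\diam(P)/\lambda)^\alpha$ --- not $(1+\lambda)^l$. Your final sentence (``pick the one enclosing ball and recurse trivially'') does not repair this, because you have not identified \emph{which} single ball is cheap enough.

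The missing idea is simple but specific: do not use a fresh ball sized by $\diam(P)/\lambda$; instead, expand the large ball $B\in\opt(P)$ itself. Since $B$ intersects $P$ and has radius $r(B)\ge 2\alpha\,\diam(P)/\lambda$, the concentric ball of radius $(1+\lambda/(2\alpha))\,r(B)\ge r(B)+\diam(P)$ already contains all of $P$. Its cost is
\[
\bigl(1+\tfrac{\lambda}{2\alpha}\bigr)^{\alpha} r(B)^\alpha \;\le\; (1+\lambda)\,r(B)^\alpha \;\le\; (1+\lambda)\,\cost(\opt(P)) \;\le\; (1+\lambda)^l\,\cost(\opt(P)),
\]
and the algorithm's best-single-ball candidate is at least this good. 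With this fix your proof is complete and matches the paper's.
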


\begin{proof}
We prove this lemma using induction on $l$. If $l=0$, then $|P| \leq 1$ and POINT-COVER$(P)$ returns an optimal solution, whose cost is $\cost(\opt(P))$. Thus assume that $l \geq 1$ and the statement is true for subsets having level at most $l-1$. Let $P \subseteq X$ be a point set with $\level(P) = l$. If $|P|$ is
smaller than the constant threshold $\kappa$, POINT-COVER$(P)$ returns an optimal
solution. So we may assume that $|P|$ is larger than this threshold. We have
two cases.
\\\\\textbf{Case 1:} There is some ball in $\opt(P)$ whose radius is 
at least $2 \alpha \cdot\diam(P)/\lambda$. Let $B$ denote such a ball and 
$r(B) \geq 2 \alpha \cdot\diam(P)/\lambda$ be its radius. Since $(1 + \lambda/2 \alpha) r(B)
\geq r(B) + \diam(P)$, the concentric ball of radius $(1 + \lambda/2 \alpha) r(B)$ contains $P$. It follows that there is a cover for $P$ that consists of
a single ball and has cost at most 
\[(1 + \lambda/2 \alpha)^{\alpha} r(B)^{\alpha} \leq (1 + \lambda) 
\cost(\opt(P)) \leq (1 + \lambda)^l \cost(\opt(P)).\]
\textbf{Case 2:} There is no ball in $\opt(P)$ whose radius is 
at least $2 \alpha \cdot\diam(P)/\lambda$. Let $\Q_{\text{0}} \subseteq \opt(P)$ contain
those balls of radius at least $\diam(P)/\gamma$. It follows from
Lemma~\ref{lem:structure} that $|\Q_{\text{0}}| \leq \textit{c}(\lambda,\gamma)$. Thus
the algorithm considers a $\Q$ with $\Q = \Q_{\text{0}}$. Fix this iteration. Also
fix the partition $\{P_1,\ldots,P_{\tau}\}$ of $P$ computed by 
RAND-PARTITION$(P)$. RAND-PARTITION ensures that $\diam(P_i) \leq \diam(P)/2$ for $1\leq i\leq \tau$. Thus $\diam(P'_i) \leq \diam(P)/2$ and the level of each $P'_i$ is at most $l-1$. Hence by induction the expected value of 
$\cost(\text{POINT-COVER}(P'_i))$ is at most  $(1+\lambda)^{l-1} \cost(\opt(P'_i))$. 

Let $\Sp = \opt(P) \setminus \Q_{\text{0}}$. We argue below that the expected
value of $\sum\limits_{i=1}^{\tau} \cost(\opt(P'_i))$ is at most $(1 + \lambda) \cost(\Sp)$. Assuming this, we have

\begin{eqnarray*}
E[ \cost(\Q_{\text{0}} \cup \bigcup_{\textit{i=}\text{1}}^{\tau} \mbox{POINT-COVER}(P'_{\textit{i}})) ] & \leq &
\cost(\Q_{\text{0}}) + (\text{1}+\lambda)^{{\textit{l}}-\text{1}} \textit{E}[ \sum_{{\textit{i}}=\text{1}}^{\tau} \cost(\opt(P'_{\textit{i}})) ] \\
& \leq & \cost(\Q_{\text{0}}) + (\text{1}+\lambda)^{\textit{l}} \cost(\Sp) \\
& \leq & (\text{1}+\lambda)^l \cost(\opt(P)).
\end{eqnarray*} 
Thus POINT-COVER$(P)$ returns a solution whose expected cost is at most
$(1 + \lambda)^l \cost(\opt(P))$, as desired.

We now argue that the expected
value of $\sum_{i=1}^{\tau} \cost(\opt(P'_i))$ is at most $(1 + \lambda) \cost(\Sp)$. Let $\B_{\textit{i}}$ consist of those balls in $\Sp$ that intersect $P_i$. For
$B \in \Sp$, let $\mu(B)$ denote the number of blocks in the partition
$\{P_1,\ldots,P_{\tau}\}$ that $B$ intersects. Because $\B_{\textit{i}}$ is a cover
for $P'_i$, we have $\cost(\opt(P'_i)) \leq \cost(\B_{\textit{i}})$. Thus
\[ \sum_{i=1}^{\tau} \cost(\opt(P'_i)) \leq \sum_{i=1}^{\tau} \cost(\B_{\textit{i}})
= \sum_{\textit{B} \in \Sp} \mu(\textit{B}) \cost(\textit{B}). \]

By definition of $\Q_{\text{0}}$, any ball $B \in \Sp = \opt(P) \setminus \Q_{\text{0}}$ has
radius at most $\frac{\diam(P)}{\gamma} = \frac{\lambda \cdot\diam(P)}{c \log n}$,
where $c$ is the constant in Theorem~\ref{thm:partitioning}. We may assume 
that $c \geq 16$ and hence $\frac{\lambda \cdot\diam(P)}{c \log n} \leq
\frac{\diam(P)}{16 \log n}$. Theorem~\ref{thm:partitioning} now implies
that 
\[E[\mu(B)] \leq 1 + \frac{c \cdot r(B) \log n}{\diam(P)} \leq
  1 + \frac{c \log n}{\diam(P)}\cdot \frac{\lambda \cdot\diam(P)}{c \log n} =
  1 + \lambda.\]
Thus the expected value of $\sum_{i=1}^{\tau} \cost(\opt(P'_i))$ is
at most
\[ \sum_{B \in \Sp} E[\mu(B)] \cost(B) \leq (1 + \lambda) 
   \sum_{B \in \Sp} \cost(B) = (1 + \lambda) \cost(\Sp),\]
as claimed.
\end{proof}

We conclude that the expected cost of the cover returned by
POINT-COVER$(X)$ is at most $(1 + \lambda)^L \cost(\opt(X)) \leq
(1 + \eps) \cost(\opt(X))$, since $\lambda = \eps/2L$.

Now consider the time complexity of the algorithm. POINT-COVER$(P)$ makes 
$(mn)^{O(c(\lambda,\gamma))}$ direct recursive calls on subsets of diameter at most 
$\diam(P)/2$. Thus the overall time complexity of POINT-COVER$(X)$ 
can be bounded by $(mn)^{O(c(\lambda,\gamma) L)}$. Plugging in $\lambda = \eps/2L$,
$\gamma = c \log n/\lambda$, and $c(\lambda,\gamma) = (9 \alpha \gamma / \lambda)^{\alpha}$, we conclude

\begin{theorem}\label{th:mcc}
There is an algorithm for MCC that runs in time 
$(mn)^{O(\frac{\alpha L^2 {\log} n}{{\epsilon}^2})^{\alpha} L} $ and returns a cover
whose expected cost is at most $(1 + \eps)$ times the optimal. Here $L$ is 
1 plus the logarithm of the aspect ratio of $X$, that is,
the ratio of the maximum and minimum interpoint 
distances in the client set $X$.
\end{theorem}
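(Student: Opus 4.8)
The plan is to assemble Theorem~\ref{th:mcc} from the two pieces that are by now in hand: the correctness bound of Lemma~\ref{lem:cover} and a direct count of the recursion tree of POINT-COVER. Since the genuinely difficult work sits in Theorem~\ref{thm:partitioning} and in Lemma~\ref{lem:cover}, I do not expect a real obstacle in the theorem itself; it is essentially a bookkeeping step. The two things that need care are keeping the random choices of different recursion levels separate (each level of POINT-COVER uses an independent call to RAND-PARTITION, and the induction in Lemma~\ref{lem:cover} already accounts for this by taking conditional expectations level by level), and checking that the running-time bound is uniform over all realizations of the randomness, which it is.

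\emph{Correctness.} I would apply Lemma~\ref{lem:cover} with $P = X$. Since the minimum interpoint distance in $X$ is $1$ and $L = 1 + \log \diam(X)$, we have $\level(X) \leq L$, so the cover returned by POINT-COVER$(X)$ has expected cost at most $(1+\lambda)^{L}\cost(\opt(X))$. Plugging in $\lambda = \eps/2L$ gives $(1+\eps/2L)^{L} \leq e^{\eps/2} \leq 1+\eps$ (for $\eps \le 1$; otherwise run with $\min(\eps,1)$), which yields the claimed $(1+\eps)$ approximation in expectation. The only feasibility remark is that every candidate solution produced by POINT-COVER$(P)$ is an honest cover of $P$: $\Q$ together with covers of the sets $P'_i$ of points of the blocks not already covered by $\Q$ covers all of $P$. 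Hence the output is always valid and the expectation bound is the only nontrivial claim.

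\emph{Running time.} This is deterministic: RAND-PARTITION always returns at most $|P| \leq n$ blocks, and the enumeration over subsets $\Q$ does not depend on the random choices, so it suffices to bound the recursion tree uniformly. In POINT-COVER$(P)$ the number of subsets $\Q \subseteq \mathcal{B}$ of size at most $c(\lambda,\gamma)$ is at most $|\mathcal{D}|^{c(\lambda,\gamma)} \leq (mn)^{c(\lambda,\gamma)}$, and for each such $\Q$ there are $\tau \leq n$ recursive calls; hence each invocation spawns $(mn)^{O(c(\lambda,\gamma))}$ direct recursive calls, each on a subset of diameter at most $\diam(P)/2$ and therefore of level at most $\level(P)-1$. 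Since $\level(X) \leq L$, the recursion has depth at most $L$, so the tree has $(mn)^{O(c(\lambda,\gamma)L)}$ nodes, and the work at each node (one call to RAND-PARTITION, forming the $P'_i$, comparing costs) is polynomial in $m$ and $n$. Thus the total time is $(mn)^{O(c(\lambda,\gamma)L)}$.

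\emph{Substitution.} Finally I would plug in the parameters: with $\lambda = \eps/2L$ and $\gamma = c\log n/\lambda$ we get $9\alpha\gamma/\lambda = 9\alpha c\log n/\lambda^{2} = O(\alpha L^{2}\log n/\eps^{2})$, hence $c(\lambda,\gamma) = (9\alpha\gamma/\lambda)^{\alpha} = O(\alpha L^{2}\log n/\eps^{2})^{\alpha}$, and the bound $(mn)^{O(c(\lambda,\gamma)L)}$ becomes $(mn)^{O(\alpha L^{2}\log n/\eps^{2})^{\alpha}L}$, matching the statement. Combining this with the correctness step completes the proof.
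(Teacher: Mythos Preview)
Your proposal is correct and follows essentially the same route as the paper: apply Lemma~\ref{lem:cover} to $P=X$ with $\level(X)\le L$ and $\lambda=\eps/2L$ for correctness, bound the recursion tree by $(mn)^{O(c(\lambda,\gamma))}$ branches per node and depth $L$ for the running time, and then substitute the parameters. Your write-up merely supplies a bit more detail (the explicit inequality $(1+\eps/2L)^L\le e^{\eps/2}\le 1+\eps$, the feasibility remark, and the observation that the time bound is deterministic), none of which departs from the paper's argument.
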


Using relatively standard techniques, which we omit here,
we can pre-process the input to ensure that the ratio of the maximum and minimum interpoint
distances in $X$ is upper bounded by a polynomial in $\frac{mn}{\eps}$. However, this affects the optimal solution by a factor of at most $(1+\epsilon)$.
After this pre-processing, we have $L = O(\log \frac{mn}{\eps})$. 
Using the algorithm in Theorem \ref{th:mcc} after the pre-processing, we obtain a $(1+\epsilon)$ approximation with the quasi-polynomial running time $O(2^{\log^{O(1)} mn})$.
Here the $O(1)$ hides a constant that depends on $\alpha$
and $\eps$.


\section{Algorithm for $k$-clustering}\label{sec:clustering}
Recall that in $k$-clustering we are given a set $X$ of points, a metric $d$ on $X$, and a positive integer $k$. Let $|X|=n$. 
For $P \subseteq X$ and integer $\kappa \geq 0$, let $\opt(P,\kappa)$ denote an optimal solution of $\kappa$-clustering for $P$ (using balls whose center can be any point in $X$). We reuse the notions of $\level(P)$, $\cost(B)$ and $\cost(\mathcal{B})$ from Section \ref{sec:MCC}, for a point set $P$, a ball $B$, and a set $\mathcal{B}$ of balls, respectively. Denote $\mathcal{D}$ to be the set of balls such that each ball is centered at a point of $y\in X$ and has radius $r=d(x,y)$ for some $x\in X$. We note that for any $P \subseteq X$, 
any ball in $\opt(P,\kappa)$ must belong to this set. Note that $|\mathcal{D}|\leq n^2$.  

To start with we prove a structure lemma for $k$-clustering.

\begin{lemma}\label{lem:structure-cluster}
Let $P \subseteq X$, $\kappa$ be a positive integer, and $\gamma \geq 1$. Then the number of balls in $\opt(P,\kappa)$ of radius 
greater than or equal to $\diam(P)/\gamma$ is at most $c(\gamma) :=
\gamma^{\alpha}$.
\end{lemma}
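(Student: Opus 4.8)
The plan is to mimic the proof of Lemma~\ref{lem:structure} but to exploit the crucial simplification available in $k$-clustering: every ball in $\opt(P,\kappa)$ is centered at a point of $X$, hence at a point of $P$ after we discard balls irrelevant to covering $P$. This removes the need for the hypothesis that $\opt(P,\kappa)$ contains no very large ball, which is why the statement here is unconditional and the bound $c(\gamma) = \gamma^{\alpha}$ is cleaner than $c(\lambda,\gamma)$.

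First I would observe that we may assume every ball $B \in \opt(P,\kappa)$ intersects $P$: a ball disjoint from $P$ contributes nothing to covering $P$, so deleting it only decreases $\cost$, contradicting optimality (or, trivially, does not increase the count of large balls). Next, I would argue that each such ball has radius at most $\diam(P)$. Indeed, since $B$ is centered at some $x \in X$ and covers at least one point $p \in P$ while $\opt(P,\kappa)$ must cover all of $P$, we can replace $B$ by the ball centered at that point $p \in P$ (a legal center, since centers range over all of $X \supseteq P$) with radius $2 r(B)$; but more to the point, the optimal solution would never use a ball of radius exceeding $\diam(P)$ to cover $P$, because a single ball centered anywhere in $P$ with radius $\diam(P)$ already covers all of $P$, so any ball of radius $> \diam(P)$ could be shrunk, again contradicting optimality. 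Hence each ball in $\opt(P,\kappa)$ has radius at most $\diam(P)$, and the total cost $\cost(\opt(P,\kappa)) \le \diam(P)^{\alpha}$, since the one-ball cover of $P$ of radius $\diam(P)$ is a feasible $\kappa$-clustering (it uses one ball, and $\kappa \ge 1$).

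Then the counting argument is immediate: let $t$ be the number of balls in $\opt(P,\kappa)$ of radius at least $\diam(P)/\gamma$. Each contributes at least $(\diam(P)/\gamma)^{\alpha}$ to the cost, so
\[
t \cdot \left(\frac{\diam(P)}{\gamma}\right)^{\alpha} \le \cost(\opt(P,\kappa)) \le \diam(P)^{\alpha},
\]
which rearranges to $t \le \gamma^{\alpha} = c(\gamma)$, as desired.

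I do not anticipate a serious obstacle here; the only point requiring a little care is the reduction showing that the relevant balls all have radius at most $\diam(P)$, which hinges on the fact that in $k$-clustering a single ball of radius $\diam(P)$ centered at any point of $P$ covers $P$ and is feasible whenever $\kappa \ge 1$ — something not available in the MCC setting, where a ball must be centered at a server in $Y$ and $\diam(P)$ may not suffice to reach $P$ from any server. (If $\kappa = 0$ and $P \neq \emptyset$ there is no feasible solution, so the statement is vacuous; if $P$ has fewer than two points the diameter is zero and the claim is trivial, so we tacitly assume $|P| \ge 2$ consistent with the running conventions of the paper.)
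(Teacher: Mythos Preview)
Your proof is correct and follows essentially the same approach as the paper: bound $\cost(\opt(P,\kappa)) \le \diam(P)^{\alpha}$ via the single-ball cover centered at a point of $P$, then count large balls. The intermediate observations that every ball in $\opt(P,\kappa)$ intersects $P$ and has radius at most $\diam(P)$ are superfluous for the argument (the counting step only needs the cost upper bound), but they are correct and do no harm.
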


\begin{proof}
Note that any ball centered at a point in $P$ and having radius $\diam(P)$ contains all the points of $P$. Now by definition of $\diam(P)$ and $\mathcal{D}$, there is a point $x \in P$ such that the ball $B(x,\diam(P)) \in \mathcal{D}$. Hence $\opt(P,\kappa) \leq \diam(P)^{\alpha}$.

Let $t$ denote the number of balls in $\opt(P,\kappa)$ of radius 
greater than or equal to $\diam(P)/\gamma$. By optimality of $\opt(P,\kappa)$,
we have $t\cdot (\diam(P)/\gamma)^{\alpha} \leq \diam(P)^{\alpha}$.
Thus $t \leq \gamma^{\alpha}$.
\end{proof}

Like in the case of MCC, we assume that the minimum distance between two points in $X$ is $1$. Let
$L = 1+\log (\diam(X))$. We fix a 
parameter $\lambda = \eps/6L$. Let $\gamma = \frac{c \log n}{\lambda}$, where
$c$ is the constant in Theorem~\ref{thm:partitioning}. 

We design a procedure CLUSTERING$(P,\kappa)$ (see Algorithm \ref{alg:cluster}) that given a subset $P$ of $X$ and an integer $\kappa$, returns a set of at most $(\text{1}+3\lambda)^{\textit{l}}\kappa$ balls whose union contains $P$, where $l=\level(P)$. We overview this procedure, focussing on the differences from the
procedure POINT-COVER$()$ used to solve the MCC problem. In CLUSTERING$(P,\kappa)$, RAND-PARTITION$(P)$ is called with $Z = X$ and $n = |X| \geq |P|$. We require two properties of the partition $\{P_1,\ldots,P_{\tau}\}$ of $P$ computed by 
RAND-PARTITION$(P)$. Let $\Q_{\text{0}}$ be the set containing the large balls of $\opt(P,\kappa)$, that is, those with radius at least $\diam(P)/\gamma$. 
Let $\Sp = \opt(P,\kappa) \setminus \Q_{\text{0}}$ denote the set of small balls,
and let $\Sp_i \subseteq \Sp$ consist of those balls that contain at least one point in $P_i$ that is not covered by $\Q_{\text{0}}$. We would like (a) $\sum_{i=1}^{\tau} \cost(\Sp_i) \leq (1 + 3 \lambda) \cost(\Sp)$, and (b) 
$\sum_{i=1}^{\tau} |\Sp_i| \leq (1 + 3 \lambda) |\Sp|$.  Theorem~\ref{thm:partitioning} ensures that each of (a) and (b) holds in expectation. However, we would like both (a) and (b) to hold simultaneously, not just in expectation. For this
reason, we try $\Theta(\log n)$ independent random partitions in Line~\ref{lin:many-partitions}, ensuring that 
with high probability, properties (a) and (b) hold for at least one of them.
 
Now let us fix one of these $\Theta(\log n)$ trials where we got a partition $\{P_1,\ldots,P_{\tau}\}$ satisfying properties (a) and (b), and also fix an iteration in Line~\ref{lin:guess-Q} where we have $\Q = \Q_{\text{0}}$. Let $P'_i \subseteq P_i$ be the points not covered by $\Q_{\text{0}}$. For each $1 \leq i \leq \tau$ and $0 \leq \kappa_1 \leq (1 + 3 \lambda)\kappa$, we set cluster($P'_i,\kappa_1$) to be the cover obtained by recursively invoking CLUSTERING$(P'_i,\kappa_1)$ (as in Line~\ref{lin:recurse}).

Let us call a
tuple $(\kappa_1, \kappa_2, \ldots, \kappa_{\tau})$ of integers {\em valid} if
$0 \leq \kappa_i \leq (1 + 3 \lambda) (\kappa - |\Q_{\text{0}}|)$ and 
$\sum_{i = 1}^{\tau} \kappa_i \leq (1 + 3 \lambda) (\kappa - |\Q_{\text{0}}|)$. We would like to minimize $\sum_{i = 1}^{\tau} \cost(\text{cluster}(P'_i,\kappa_i))$ over all valid tuples $(\kappa_1, \kappa_2, \ldots, \kappa_{\tau})$. As there are too many valid tuples to allow explicit enumeration, we solve this optimization problem in Lines \ref{lin:dp-begin}--\ref{lin:dp-end} via a dynamic programming approach.

This completes our overview. Our overall algorithm for $k$-clustering calls the procedure CLUSTERING$(X,k)$. Next we give the approximation bound on the cost of the solution returned by CLUSTERING$(P,\kappa)$. 

\begin{algorithm}[h]
    \caption{CLUSTERING$(P,\kappa)$}
    \label{alg:cluster}
    \begin{algorithmic}[1]
        \REQUIRE A subset $P \subseteq X$, an integer $\kappa$.
        \ENSURE A set of balls whose union contains the points in $P$.
	\IF{$|P|$ is smaller than some constant $\beta$}
	    \RETURN a minimum solution by checking all solutions with at most $\min\{\kappa,\beta\}$ balls. 
	\ENDIF
        \STATE sol $\leftarrow$ the best solution with one ball
        \STATE cost $\leftarrow \cost(sol)$
        \STATE $l \leftarrow$ level($P$)
	\FOR{\textbf{all} $2\log_{3/2} {n}$ iterations} \label{lin:many-partitions}
        \STATE Let $\{P_1,\ldots,P_{\tau}\}$ be the set of nonempty subsets returned by RAND-PARTITION$(P)$ 
        \STATE Let $\mathcal{B}$ be the set of balls in $\mathcal{D}$ having radius greater than $\frac{\diam(P)}{\gamma}$
        \FOR{\textbf{each} $\Q \subseteq \mathcal{B}$ of size at most $c(\gamma)$} \label{lin:guess-Q}
	    \FOR {\textit{i} $=1$ to $\tau$}
		\STATE Let $P'_i = \{p \in P_i \ | \ p \not\in \bigcup_{B \in \Q} B\}$ 
	    \ENDFOR
	    \FOR {\textbf{each} $1\leq$ \textit{i} $\leq \tau$ \text{and} $0\leq \kappa_1\leq (\text{1+3}\lambda)\kappa$}
		\STATE cluster($P'_i,\kappa_1) \leftarrow$ CLUSTERING$(P'_i,\kappa_1)$ \label{lin:recurse}
	    \ENDFOR
	    \FOR {\textit{i} $=0$ to $\tau-1$} \label{lin:dp-begin}
		\STATE $R_i \leftarrow \bigcup_{j=i+1}^{\tau} P'_j$
	    \ENDFOR
	    \FOR {$\kappa_1=0$ to $(\text{1+3}\lambda)\kappa$} 
		\STATE cluster($R_{\tau-1},\kappa_1$) $\leftarrow$ cluster($P'_{\tau},\kappa_1)$
	    \ENDFOR	    
	    \FOR {\textbf{all} \textit{i} $=\tau-2$ to $0$ \text{and} $0\leq \kappa_1\leq (\text{1+3}\lambda)\kappa$}		
		\STATE  $\kappa'_{\min} \leftarrow$ $ \argmin_{\kappa': 0\leq \kappa'\leq \kappa_1}$ cost$($cluster($P'_{i+1},\kappa')$ $\cup$ cluster($R_{i+1},\kappa_1-\kappa'))$
		\STATE  cluster($R_{i},\kappa_1) \leftarrow$ cluster($P'_{i+1},\kappa'_{\min})$ $\cup$ cluster($R_{i+1},\kappa_1-\kappa'_{\min})$
	    \ENDFOR
	    \STATE $\Q' \leftarrow \Q$ $ \cup$ cluster$(R_0,(1 + 3\lambda)\cdot (\kappa - |\Q|))$ \label{lin:dp-end}
	    \IF{$|\Q'| \leq (\text{1+3}\lambda)^{\textit{l}}\kappa$ \text{and} $\cost(\Q') <$ cost}
		\STATE cost $\leftarrow \cost(\Q')$
		\STATE sol $\leftarrow \Q'$
	    \ENDIF
	    
	\ENDFOR
	\ENDFOR
        \RETURN sol
    \end{algorithmic}
\end{algorithm}

\begin{lemma}\label{lem:cluster}
For any $P \subseteq X$ and an integer $\kappa \geq 1$, CLUSTERING$(P,\kappa)$ returns a solution consisting of at most $(1+3\lambda)^{\textit{l}}\kappa$ balls and with probability at least $1-\frac{|P|-1}{n^2}$, the cost of the solution is at most $(1+3\lambda)^{l} \cost(\opt(P,\kappa))$, where $l = \level(P)$. 
\end{lemma}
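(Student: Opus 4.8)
The plan is to mirror the structure of the proof of Lemma~\ref{lem:cover}, but track two quantities simultaneously: the cost of the returned solution and the number of balls it uses. I would argue by induction on $l = \level(P)$. The base case $l = 0$ (equivalently $|P| \le 1$) is immediate since CLUSTERING returns an optimal solution exactly. For the inductive step, I first dispose of the easy sub-cases: if $|P|$ is below the constant threshold $\beta$ the algorithm returns an optimal solution; and if $\opt(P,\kappa)$ contains a single ball of radius at least $2\alpha\cdot\diam(P)/\lambda$ (the analogue of Case~1 in Lemma~\ref{lem:cover}), a single enlarged ball covers $P$ at cost at most $(1+\lambda)\cost(\opt(P,\kappa))$, which is within budget on both count (one ball) and cost. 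It remains to handle the main case where all balls of $\opt(P,\kappa)$ have radius below $2\alpha\cdot\diam(P)/\lambda$.

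In the main case, let $\Q_0 \subseteq \opt(P,\kappa)$ be the large balls (radius $\ge \diam(P)/\gamma$); by Lemma~\ref{lem:structure-cluster}, $|\Q_0| \le c(\gamma)$, so some iteration of the loop on Line~\ref{lin:guess-Q} guesses $\Q = \Q_0$. Fix that iteration. The key probabilistic claim is that, among the $2\log_{3/2} n$ independent runs of RAND-PARTITION, with probability at least $1 - \frac{1}{n^2}$ at least one yields a partition $\{P_1,\dots,P_\tau\}$ satisfying both (a) $\sum_i \cost(\Sp_i) \le (1+3\lambda)\cost(\Sp)$ and (b) $\sum_i |\Sp_i| \le (1+3\lambda)|\Sp|$, where $\Sp = \opt(P,\kappa)\setminus\Q_0$ and $\Sp_i$ are the small balls meeting $P_i'$. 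To see this: by Theorem~\ref{thm:partitioning} and the choice $\gamma = c\log n/\lambda$, each small ball $B$ has $E[\mu(B)] \le 1+\lambda$, so $E[\sum_i \cost(\Sp_i)] \le (1+\lambda)\cost(\Sp)$ and $E[\sum_i|\Sp_i|] \le (1+\lambda)|\Sp|$; moreover $\sum_i\cost(\Sp_i) \ge \cost(\Sp)$ and $\sum_i|\Sp_i|\ge|\Sp|$ always. Writing $U = \sum_i\cost(\Sp_i) - \cost(\Sp) \ge 0$ with $E[U]\le\lambda\cost(\Sp)$, Markov's inequality gives $\Pr[U > 2\lambda\cost(\Sp)] \le 1/2$; similarly for the count with a $3\lambda$ slack. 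Hence a single run fails (a) or (b) with probability at most $2/3$ (a union bound with the right constants — this is why the slack is $3\lambda$ rather than $\lambda$, giving room for the union bound over the two events), so all $2\log_{3/2}n$ runs fail with probability at most $(2/3)^{2\log_{3/2}n} \le 1/n^2$.

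Condition on such a good partition. Each $\Sp_i$ covers $P_i'$, so $\cost(\opt(P_i',|\Sp_i|)) \le \cost(\Sp_i)$, and each $\diam(P_i') \le \diam(P)/2$, so $\level(P_i') \le l-1$. Apply the induction hypothesis to CLUSTERING$(P_i',|\Sp_i|)$: it returns at most $(1+3\lambda)^{l-1}|\Sp_i|$ balls, and with probability $\ge 1 - \frac{|P_i'|-1}{n^2}$ its cost is at most $(1+3\lambda)^{l-1}\cost(\Sp_i)$. The tuple $(|\Sp_1|,\dots,|\Sp_\tau|)$ is valid (by property (b), $\sum_i|\Sp_i| \le (1+3\lambda)|\Sp| = (1+3\lambda)(\kappa - |\Q_0|)$), so the dynamic program on Lines~\ref{lin:dp-begin}–\ref{lin:dp-end} finds a tuple at least as good; thus the candidate $\Q' = \Q_0 \cup \text{cluster}(R_0, (1+3\lambda)(\kappa-|\Q_0|))$ has ball count $|\Q_0| + \sum_i (1+3\lambda)^{l-1}|\Sp_i| \le |\Q_0| + (1+3\lambda)^{l}|\Sp| \le (1+3\lambda)^l\kappa$, and cost at most $\cost(\Q_0) + \sum_i(1+3\lambda)^{l-1}\cost(\Sp_i) \le \cost(\Q_0) + (1+3\lambda)^l\cost(\Sp) \le (1+3\lambda)^l\cost(\opt(P,\kappa))$. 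For the failure probability, union-bound over the event that no good partition is found ($\le 1/n^2$) and the events that one of the recursive calls on the $P_i'$ fails ($\le \sum_i \frac{|P_i'|-1}{n^2}$); since the $P_i'$ are disjoint subsets of $P$ with $\sum_i|P_i'|\le|P|$ and $\tau \ge 1$ blocks are nonempty, $\sum_i(|P_i'|-1) \le |P| - 1$, giving total failure probability $\le \frac{|P|-1}{n^2} + \frac{1}{n^2}$ — here one must be slightly careful: the clean bound $\frac{|P|-1}{n^2}$ follows because at least one $P_i'$ is strictly smaller, or by absorbing the $+1/n^2$ using that $|P|$ exceeds the threshold $\beta$; I would state it so the arithmetic closes exactly.

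The main obstacle I anticipate is the bookkeeping that makes properties (a) and (b) hold \emph{simultaneously} with the stated high probability while the slack stays at the fixed $3\lambda$ — getting the constants in the Markov/union-bound argument to line up with $2\log_{3/2}n$ repetitions, and then threading the failure probabilities through the recursion so the final bound is exactly $\frac{|P|-1}{n^2}$ rather than something messier. The cost and cardinality inductions themselves are routine once the good-partition event is in hand.
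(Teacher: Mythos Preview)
Your plan is correct and follows essentially the same route as the paper's proof. Two small points worth noting. First, the Case~1/Case~2 split you import from Lemma~\ref{lem:cover} is unnecessary here: Lemma~\ref{lem:structure-cluster} does not require the hypothesis that $\opt(P,\kappa)$ contains no ball of radius $\ge 2\alpha\,\diam(P)/\lambda$, since a single ball of radius $\diam(P)$ always covers $P$ and hence $\cost(\opt(P,\kappa)) \le \diam(P)^{\alpha}$; the paper therefore goes directly to the main case. Second, the obstacle you flag---closing the failure-probability arithmetic at exactly $\frac{|P|-1}{n^2}$---is handled in the paper by using that RAND-PARTITION always returns $\tau \ge 2$ nonempty blocks (each block has diameter at most $\diam(P)/2$, so no single block equals $P$); then $\sum_i (|P_i|-1) = |P|-\tau \le |P|-2$, the recursive failures contribute at most $\frac{|P|-2}{n^2}$, and combining with the $\frac{1}{n^2}$ for the good-partition event gives $(1-\frac{|P|-2}{n^2})(1-\frac{1}{n^2}) \ge 1-\frac{|P|-1}{n^2}$.
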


\begin{proof}
We prove this lemma using induction on $l$. If $l=0$, then $|P| \leq 1$ and CLUSTERING$(P,$ $\kappa)$ returns an optimal solution, whose cost is $\cost(\opt(P,\kappa))$. Thus assume that $l \geq 1$ and the statement is true for subsets having level at most $l-1$. If $|P|$ is
smaller than the constant threshold $\beta$, CLUSTERING$(P,\kappa)$ returns an optimal
solution. So we may assume that $|P|$ is larger than this threshold. 

Consider one of the $2\log_{3/2} {n}$ iterations of CLUSTERING$(P,\kappa)$. Fix the partition $\{P_1,\ldots,P_{\tau}\}$ of $P$ computed by 
RAND-PARTITION$(P)$ in this iteration. Let $\Q_{\text{0}}$ be the set containing
the balls of $\opt(P,\kappa)$ with radius at least $\diam(P)/\gamma$. It follows from
Lemma~\ref{lem:structure-cluster} that $|\Q_{\text{0}}| \leq \textit{c}(\gamma)$. Fix the choice $\Q = \Q_{\text{0}}$. 

RAND-PARTITION($P$) ensures that $\diam(P_i) \leq \diam(P)/2$ for $1\leq i\leq \tau$. Thus $\diam(P'_i) \leq \diam(P)/2$ and the level of each $P'_i$ is at most $l-1$. Hence by induction, cluster($P'_i,\kappa_1)$ contains at most $(\text{1+3}\lambda)^{\textit{l}-1} \kappa_1$ balls and with probability at least $1-\frac{|P'_i|-1}{n^2}$, its cost 
is at most $(1+3\lambda)^{l-1} \cost(\opt(P'_i,\kappa_1))$ for $1\leq \kappa_1\leq (1+3\lambda)\kappa$. 


Let $\Sp = \opt(P,\kappa) \setminus \Q_{\text{0}}$. Thus $|\Sp|\leq \kappa - |\Q_{\text{0}}|$. We note that the union of the balls in $\Sp$ contains the points in $\cup_{i=1}^{\tau} P'_i$. Let $\Sp_{i} \subseteq \Sp$ be the set of balls that intersect with $P'_i$ and thus the union of balls in $\Sp_{\textit{i}}$ contains $P'_i$, where $1\leq i\leq \tau$. We argue below that the value of $\sum_{i=1}^{\tau} |\Sp_{\textit{i}}|$ is at most $(1 + 3\lambda) |\Sp|$ and the value of $\sum_{i=1}^{\tau} \cost(\opt(P'_i,|\Sp_{\textit{i}}|))$ is at most $(1 + 3\lambda) \cost(\Sp)$ with probability at least $1/3$. Then the probability is at least $1-\frac{1}{n^2}$ that corresponding to one of the $2{\log}_{3/2} n$ iterations the value of $\sum_{i=1}^{\tau} |\Sp_{\textit{i}}|$ is at most $(1 + 3\lambda) |\Sp|$ and the value of $\sum_{i=1}^{\tau} \cost(\opt(P'_i,|\Sp_{\textit{i}}|))$ is at most $(1 + 3\lambda) \cost(\Sp)$. We refer to this event as the first good event. Now let us assume that the first good event occurs. 

Note that by induction, cluster($P'_i,|\Sp_{\textit{i}}|)$ contains at most $(1+3\lambda)^{l-1} |\Sp_{\textit{i}}|$ balls and with probability at least $1-\frac{|P'_i|-1}{n^2}$, its cost is at most $(1+3\lambda)^{l-1} \cost(\opt(P'_i,|\Sp_{\textit{i}}|))$ for $1\leq i\leq \tau$. The probability that for every $1\leq i\leq \tau$, the cost of cluster($P'_i,|\Sp_{\textit{i}}|)$ is at most $(1+3\lambda)^{l-1} \cost(\opt(P'_i,|\Sp_{\textit{i}}|))$ and it contains at most $(1+3\lambda)^{l-1} |\Sp_{\textit{i}}|$ balls, is at least $$\prod_{i=1}^{\tau} 1-\frac{|P'_i|-1}{n^2} \geq \prod_{i=1}^{\tau} 1-\frac{|P_i|-1}{n^2} \geq 1-\frac{|P|-2}{n^2}.$$
We refer to this event as the second good event. Assuming this second good event occurs, it follows by an easy induction on $i$ that cluster($R_{i},\sum_{j=i+1}^{\tau}$ $ |\Sp_{\textit{j}}|)$ covers $R_i$=$\bigcup_{j=i+1}^{\tau} P'_j$ with at most $(\text{1}+3\lambda)^{\textit{l}-1}\sum_{j=i+1}^{\tau} |\Sp_{\textit{j}}|$ balls and its cost is at most $(\text{1+3}\lambda)^{{\textit{l}}-\text{1}} \sum_{{\textit{j}}=\text{i+1}}^{\tau} \cost(\opt(P'_{\textit{j}},|\Sp_{\textit{j}}|))$. Thus cluster($R_{0},\sum_{i=1}^{\tau} |\Sp_{\textit{i}}|)$ covers $R_0=\bigcup_{i=1}^{\tau} P'_i$ with at most $(\text{1}+3\lambda)^{\textit{l}-1}\sum_{i=1}^{\tau} |\Sp_{\textit{i}}|$ balls and its cost is at most $(\text{1+3}\lambda)^{{\textit{l}}-\text{1}} \sum_{{\textit{i}}=\text{1}}^{\tau} \cost(\opt(P'_{\textit{i}},|\Sp_{\textit{i}}|))$. Hence $\Q_{\text{0}} \cup$ cluster($R_{0},\sum_{i=1}^{\tau} |\Sp_{\textit{i}}|)$ covers $P$.
Now assuming both the good events occur, we have
\begin{eqnarray*}
|\Q_{\text{0}} \cup \text{cluster}(R_{0},\sum_{i=1}^{\tau} |\Sp_{\textit{i}}|)| & \leq & |\Q_{\text{0}}|+\sum_{\textit{i}=\text{1}}^{\tau} (\text{1+3}\lambda)^{\textit{l}-\text{1}} |\Sp_{\textit{i}}|\\ & \leq & |\Q_{\text{0}}|+(\text{1+3}\lambda)^{\textit{l}-\text{1}}\sum_{\textit{i}=\text{1}}^{\tau} |\Sp_{\textit{i}}|\\ & \leq & |\Q_{\text{0}}|+ (\text{1+3}\lambda)^{\textit{l}} |\Sp|\\ & \leq & |\Q_{\text{0}}|+ (\text{1+3}\lambda)^{\textit{l}} (\kappa - |\Q_{\text{0}}|)\\ & \leq & (\text{1+3}\lambda)^{\textit{l}} \kappa.
\end{eqnarray*}
\begin{eqnarray*}
\cost(\Q_{\text{0}} \cup \text{cluster}(R_{0},\sum_{i=1}^{\tau} |\Sp_{\textit{i}}|))  & \leq &
\cost(\Q_{\text{0}}) + (\text{1+3}\lambda)^{{\textit{l}}-\text{1}} \sum_{{\textit{i}}=\text{1}}^{\tau} \cost(\opt(P'_{\textit{i}},|\Sp_{\textit{i}}|))  \\
& \leq & \cost(\Q_{\text{0}}) + (\text{1+3}\lambda)^{\textit{l}} \cost(\Sp) \\
& \leq & (\text{1+3}\lambda)^l \cost(\opt(P,\kappa)).
\end{eqnarray*} 
The probability that both the good events occur is at least $$(1-\frac{|P|-2}{n^2})\cdot (1-\frac{1}{n^2})\geq 1-\frac{|P|-1}{n^2}.$$ 
Hence the statement of the lemma follows by noting that the value of $\sum_{i=1}^{\tau} |\Sp_{\textit{i}}|$ in $\text{cluster}(R_{0},\sum_{i=1}^{\tau} |\Sp_{\textit{i}}|)$ is at most $(1 + 3\lambda)\cdot (\kappa - |\Q_{\text{0}}|)$.

We now argue that the probability that $\sum_{i=1}^{\tau} |\Sp_{\textit{i}}|$ is greater than $(1 + 3\lambda) |\Sp|$ is at most $1/3$ and the probability that $\sum_{i=1}^{\tau} \cost(\opt(P'_i,|\Sp_{\textit{i}}|))$ is greater than $(1 + 3\lambda) \cost(\Sp)$ is at most $1/3$. Then using union bound, the probability that $\sum_{i=1}^{\tau} |\Sp_{\textit{i}}|$ is at most $(1 + 3\lambda) |\Sp|$ and $\sum_{i=1}^{\tau} \cost(\opt(P'_i,|\Sp_{\textit{i}}|))$ is at most $(1 + 3\lambda) \cost(\Sp)$ is at least $1/3$.

For
$B \in \Sp$, let $\mu(B)$ denote the number of blocks in the partition
$\{P_1,\ldots,P_{\tau}\}$ that non-terminally intersect $B$. We note that at most one block can terminally intersect $B$. Because $\Sp_{\textit{i}}$ is a cover for $P'_i$ with $|\Sp_{\textit{i}}|$ balls, we have $\cost(\opt(P'_i,|\Sp_{\textit{i}}|)) \leq \cost(\Sp_{\textit{i}})$. Thus
\[ \sum_{i=1}^{\tau} \cost(\opt(P'_i,|\Sp_{\textit{i}}|)) \leq \sum_{\textit{i}=\text{1}}^{\tau} \cost(\Sp_{\textit{i}})
= \sum_{\textit{B} \in \Sp} (\text{1}+\mu(\textit{B})) \cost(\textit{B})=\cost(\Sp)+\sum_{\textit{B} \in \Sp} \mu(\textit{B}) \cost(\textit{B}). \]
and,
\[
 \sum_{i=1}^{\tau} |\Sp_{\textit{i}}| \leq \sum_{\textit{B} \in \Sp} (\text{1}+\mu(\textit{B}))=|\Sp|+\sum_{\textit{B} \in \Sp} \mu(\textit{B}).
\]

By definition of $\Q_{\text{0}}$, any ball $B \in \Sp = \opt(P,\kappa) \setminus \Q_{\text{0}}$ has
radius at most $\frac{\diam(P)}{\gamma} = \frac{\lambda \cdot\diam(P)}{c \log n}$,
where $c$ is the constant in Theorem~\ref{thm:partitioning}. We may assume 
that $c \geq 16$ and hence $\frac{\lambda \cdot\diam(P)}{c \log n} \leq
\frac{\diam(P)}{16 \log n}$. Theorem~\ref{thm:partitioning} now implies
that 
\[E[\mu(B)] \leq \frac{c \cdot r(B) \log n}{\diam(P)} \leq
  \frac{c \log n}{\diam(P)}\cdot \frac{\lambda \cdot\diam(P)}{c \log n} =
  \lambda.\]
Using linearity of expectation, $$E[\sum_{B \in \Sp} \mu(B)]=\sum_{B \in \Sp} E[\mu(B)]\leq \lambda\cdot |\Sp|.$$ Now by Markov's inequality, $$Pr[\sum_{B \in \Sp} \mu(B) > 3\lambda\cdot |\Sp|] \leq \text{1/3}.$$ It follows that,
\begin{eqnarray*}
 & & Pr[\sum_{i=1}^{\tau} |\Sp_{\textit{i}}| > (\text{1 + 3}\lambda)|\Sp|]\\
 &\leq& Pr[|\Sp|+\sum_{\textit{B} \in \Sp} \mu(\textit{B})> (\text{1 + 3}\lambda)|\Sp|]\\
 &\leq& Pr[\sum_{B \in \Sp} \mu(B)>3\lambda\cdot |\Sp|]\leq \text{1/3}.
\end{eqnarray*}
Again using linearity of expectation,
$$E[\sum_{B \in \Sp} \mu(B)\cost(B)]=\sum_{B \in \Sp} E[\mu(B)]\cost(B)\leq \lambda\cdot \cost(\Sp).$$Now by Markov's inequality, $$Pr[\sum_{B \in \Sp} \mu(B)\cost(B) > 3\lambda\cdot \cost(\Sp)] \leq \text{1/3}.$$ It follows that,
\begin{eqnarray*}
 & & Pr[\sum_{i=1}^{\tau} \cost(\opt(P'_i,|\Sp_{\textit{i}}|)) > (\text{1 + 3}\lambda)\cost(\Sp)]\\
 &\leq& Pr[\cost(\Sp)+\sum_{\textit{B} \in \Sp} \mu(\textit{B}) \cost(\textit{B})> (\text{1 + 3}\lambda)\cost(\Sp)]\\
 &\leq& Pr[\sum_{B \in \Sp} \mu(B)\cost(B) > 3\lambda\cdot \cost(\Sp)] \leq \text{1/3}.
\end{eqnarray*}
\end{proof}

Since $\lambda = \eps/6L$, $(1+3\lambda)^L \leq 1+\eps$. Thus we conclude that with probability at least $1-\frac{1}{n}$, CLUSTERING$(X,k)$ returns a solution with at most $(\text{1}+\eps)k$ balls whose cost is at most $(1 + \eps) \cost(\opt(X,k))$.

Now consider the time complexity of the algorithm. CLUSTERING$(P,\kappa)$ makes 
$n^{O(c(\gamma))}$ direct recursive calls on subsets of diameter at most 
$\diam(P)/2$. Thus the overall time complexity of CLUSTERING$(X,k)$ 
can be bounded by $n^{O(c(\gamma) L)}$. Plugging in $\lambda = \eps/6L$,
$\gamma = c \log n/\lambda$, and $c(\gamma) = \gamma^{\alpha}$, we conclude

\begin{theorem}\label{th:cluster}
There is a randomized algorithm for $k$-clustering that runs in time 
$n^{O((\frac{L {\log} n}{{\epsilon}})^{\alpha} L)}$ and with probability at least $1-\frac{1}{n}$ returns a solution with at most $(1 + \eps)k$ balls
whose cost is at most $(1 + \eps)$ times the optimal. Here $L$ is 
1 plus the logarithm of the aspect ratio of $X$, that is,
the ratio of the maximum and minimum interpoint 
distances in the set $X$.
\end{theorem}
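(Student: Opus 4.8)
The plan is to combine the per-invocation guarantee of Lemma~\ref{lem:cluster} with a bound on the size of the recursion tree generated by CLUSTERING. For correctness, I would apply Lemma~\ref{lem:cluster} with $P = X$ and $\kappa = k$: since $\level(X) \le L$, the call CLUSTERING$(X,k)$ returns a set of at most $(1+3\lambda)^{L} k$ balls, and with probability at least $1 - \frac{|X|-1}{n^2} = 1 - \frac{n-1}{n^2} \ge 1 - \frac{1}{n}$ the cost of this solution is at most $(1+3\lambda)^{L}\,\cost(\opt(X,k))$. Substituting $\lambda = \eps/6L$ gives $(1+3\lambda)^{L} = (1+\tfrac{\eps}{2L})^{L} \le e^{\eps/2} \le 1+\eps$, so both the number of balls and the cost lie within the claimed $(1+\eps)$ factor with probability at least $1-1/n$. (We may assume $k \le n$, since otherwise the instance is trivially solved with cost $0$.)

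For the running time I would bound the number of nodes of the recursion tree. First, the recursion depth is at most $L$: by Theorem~\ref{thm:partitioning} each block produced by RAND-PARTITION$(P)$ has diameter at most $\diam(P)/2$, hence $\diam(P'_i) \le \diam(P)/2$ and $\level(P'_i) \le \level(P)-1$; since $\level(X) \le L$ and a set of level $0$ has at most one point (which bottoms out the recursion), the depth is at most $L$. Next I count the direct recursive calls made by a single invocation CLUSTERING$(P,\kappa)$: the outer loop of Line~\ref{lin:many-partitions} runs $2\log_{3/2} n$ times; inside it, the loop of Line~\ref{lin:guess-Q} enumerates at most $|\mathcal{D}|^{c(\gamma)} \le n^{2c(\gamma)}$ subsets $\Q$; and for each $\Q$ the loop at Line~\ref{lin:recurse} makes at most $\tau \cdot ((1+3\lambda)\kappa + 1) = O(n^2)$ recursive calls, using $\tau \le |P| \le n$ and $\kappa = O(n)$ (the parameter handed down at depth $d$ is at most $(1+3\lambda)^d k \le (1+\eps)k = O(n)$). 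The dynamic program in Lines~\ref{lin:dp-begin}--\ref{lin:dp-end} makes no further recursive calls and runs in $\mathrm{poly}(n)$ time. Hence each node of the recursion tree spawns $n^{O(c(\gamma))}$ children and performs $\mathrm{poly}(n)$ additional work, so the total running time is $\bigl(n^{O(c(\gamma))}\bigr)^{L} = n^{O(c(\gamma)\,L)}$.

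Finally I would substitute the parameters: $\gamma = c\log n/\lambda = 6cL\log n/\eps = O\!\bigl(\tfrac{L\log n}{\eps}\bigr)$, and by Lemma~\ref{lem:structure-cluster} $c(\gamma) = \gamma^{\alpha} = O\!\bigl(\tfrac{L\log n}{\eps}\bigr)^{\alpha}$. Plugging these into $n^{O(c(\gamma) L)}$ yields the stated bound $n^{O((\frac{L\log n}{\eps})^{\alpha} L)}$.

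The substantive work is entirely inside Lemma~\ref{lem:cluster} (the two-good-events argument together with the union bound over the $\Theta(\log n)$ independent partition trials), which we may assume. The only points needing care in this final step are that the recursion genuinely terminates within depth $L$ and that the cluster-count parameter stays $O(n)$ along every root-to-leaf path, so that the branching factor is $n^{O(c(\gamma))}$ rather than larger; both are immediate from the structure of Algorithm~\ref{alg:cluster}, and neither is a real obstacle.
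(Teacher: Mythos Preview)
Your proposal is correct and follows essentially the same approach as the paper: invoke Lemma~\ref{lem:cluster} at the root with $P=X$, $\kappa=k$ to obtain the approximation and probability guarantees (using $(1+3\lambda)^L \le 1+\eps$ since $\lambda=\eps/6L$), and bound the running time by $n^{O(c(\gamma))}$ direct recursive calls per node times recursion depth $L$, then substitute $\gamma$ and $c(\gamma)=\gamma^{\alpha}$. Your write-up is in fact more detailed than the paper's (which asserts the branching factor and depth in one sentence each); the only minor caveat is that your step $e^{\eps/2}\le 1+\eps$ tacitly assumes $\eps$ is not too large (it holds for $\eps\le 1$), which is standard.
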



\section{Inapproximability Result}\label{sec:inapprox}
In this section we present an inapproximability result which complements the result in Section \ref{sec:MCC}. In particular here we consider the case when $\alpha$ is not a constant. The heart of this result is a reduction from the dominating set problem. Given a graph $G=(V,E)$, a dominating set for $G$ is a subset $V'$ of $V$ such that for any vertex $v \in V\setminus V'$, $v$ is connected to at least one vertex of $V'$ by an edge in $E$. The dominating set problem is defined as follows.\\\\
\textbf{Dominating Set Problem (DSP)}\\
INSTANCE: Graph $G=(V,E)$, positive integer $k \leq |V|$.\\
QUESTION: Is there a dominating set for $G$ of size at most $k$?\\

The following inapproximability result is proved by Kann \cite{viggo}.

\begin{theorem}\label{th:mds}
There is a constant $c > 0$ such that there is no polynomial-time $c\log |V|$-factor approximation algorithm for DSP assuming $\mathcal{P} \neq \mathcal{NP}$. 
\end{theorem}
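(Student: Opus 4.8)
The plan is to derive the statement---which is Kann's---from the well-understood inapproximability of \textsc{Minimum Set Cover}, via the classical approximation-preserving equivalence between that problem and \textsc{Minimum Dominating Set}. First I would record the two directions of the equivalence. From a graph $G=(V,E)$, set $U=V$ and $\mathcal{S}=\{N[v]\mid v\in V\}$, where $N[v]$ is the closed neighborhood; then $V'\subseteq V$ is a dominating set of $G$ if and only if $\{N[v]\mid v\in V'\}$ covers $U$, so the two optima coincide exactly. Conversely, from a set-cover instance $(U,\mathcal{S})$ with $\bigcup\mathcal{S}=U$, build a graph on vertex set $\mathcal{S}\cup U$: join each $S\in\mathcal{S}$ to every element of $S$, and make $\mathcal{S}$ a clique. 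A set of set-vertices dominates this graph exactly when it covers $U$; moreover any dominating set can be turned, without increasing its size, into one using only set-vertices---replace an element vertex $e$ by any $S\in\mathcal{S}$ with $e\in S$, which dominates a superset of what $e$ dominated. Hence minimum dominating sets of the constructed graph correspond to minimum set covers of $(U,\mathcal{S})$.

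Second, I would invoke the inapproximability of \textsc{Minimum Set Cover} in the form that the gap-producing PCP reductions supply: there is a constant $c'>0$ such that, on instances in which both $|U|$ and $|\mathcal{S}|$ are bounded by a common parameter $N$, no polynomial-time algorithm achieves approximation ratio $c'\log N$ unless $\mathcal{P}=\mathcal{NP}$ (for a sharp version one may cite Dinur--Steurer, but the weaker $\Theta(\log)$ bound, already available from Lund--Yannakakis together with later amplification/derandomization, suffices). Applying the second direction of the equivalence to such hard instances yields a graph $G$ with $|V|=|\mathcal{S}|+|U|\le 2N$, hence $\log|V|=\Theta(\log N)$, and with $\mathrm{OPT}_{\mathrm{DS}}(G)=\mathrm{OPT}_{\mathrm{SC}}(U,\mathcal{S})$. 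A hypothetical polynomial-time $c\log|V|$-approximation for DSP, run on $G$ and followed by the size-non-increasing clean-up that deletes element vertices, would then produce a set cover of size at most $c\log|V|\cdot\mathrm{OPT}_{\mathrm{SC}}\le (2c)\log N\cdot\mathrm{OPT}_{\mathrm{SC}}$; choosing $c$ small enough relative to $c'$ makes this beat the Set Cover threshold, a contradiction.

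The step I expect to need the most care is controlling the logarithmic gap across the reduction: the argument works only because the hard Set Cover instances have a \emph{polynomial} number of sets, so that $\log|V|$ and $\log N$ stay within a constant factor; if $|\mathcal{S}|$ were allowed to be exponential in $|U|$, the logarithmic inapproximability would be washed out. Everything else---checking that the clean-up step is genuinely size-non-increasing, and pinning down the final constant $c$ in terms of $c'$ and the constant hidden in $|V|\le 2N$---is routine bookkeeping.
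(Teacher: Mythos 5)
The paper does not actually prove this theorem; it is quoted as a known result of Kann \cite{viggo}, so there is no in-paper argument to compare against. Your derivation is correct and is in fact the standard route by which this result is established in the literature (essentially Kann's own): the closed-neighborhood correspondence in one direction, and in the other the graph on $\mathcal{S}\cup U$ with $\mathcal{S}$ a clique, together with the size-non-increasing clean-up (replacing an element vertex $e$ by any set $S\ni e$ works because $S$ dominates the whole clique plus $e$, so $N[S]\supseteq N[e]$), gives exact preservation of the optimum, and the logarithmic gap survives because the hard \textsc{Set Cover} instances have polynomially many sets, so $\log|V|=\Theta(\log N)$. The only point to state carefully is the complexity assumption: Lund--Yannakakis by itself yields $\Omega(\log n)$-hardness of \textsc{Set Cover} only under a quasi-polynomial-time assumption, so to obtain the theorem under $\mathcal{P}\neq\mathcal{NP}$ you must invoke a version of \textsc{Set Cover} hardness that itself holds under $\mathcal{P}\neq\mathcal{NP}$ (Raz--Safra, Alon--Moshkovitz--Safra, or Dinur--Steurer, as you mention); with that citation pinned down, your argument is complete and matches the cited result, with the constant $c$ determined by $c'$ and the factor $2$ in $|V|\le 2N$.
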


The following theorem shows an inapproximability bound for MCC when $\alpha \geq \log |X|$. 

\begin{theorem}
For $\alpha \geq \log |X|$, no polynomial time algorithm for MCC can achieve an approximation factor better than $c \log |X|$ assuming $\mathcal{P} \neq \mathcal{NP}$. 
\end{theorem}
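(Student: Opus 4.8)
The plan is to give a gap-preserving reduction from the Dominating Set Problem and invoke Kann's hardness (Theorem~\ref{th:mds}). Given a graph $G=(V,E)$ with $n=|V|$, I would build an MCC instance with $|X|=n$ and choose $\alpha=\lceil\log_2 n\rceil$, so the promise $\alpha\ge\log|X|$ is met while $2^{\alpha}\ge n$. Take $X=\{x_v:v\in V\}$ and $Y=\{y_v:v\in V\}$, two \emph{disjoint} copies of $V$ (disjointness matters, otherwise a radius-$0$ ball would cover a client for free), and define a metric whose only positive values are $1$ and $2$: set $d(x_u,y_v)=1$ if $v$ lies in the closed neighbourhood $N[u]$ and $d(x_u,y_v)=2$ otherwise, and set every other distance between distinct points (inside $X$, or inside $Y$) to $2$. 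Since all positive distances lie in $\{1,2\}$, the triangle inequality holds trivially, so $d$ is a metric.

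The first step is to identify $\opt(X)$ with the domination number $\gamma(G)$. Because every client--server distance is $1$ or $2$, the only balls that matter are: radius-$1$ balls $B(y_v,1)$, which cover exactly $\{x_u:u\in N[v]\}$ at cost $1$; and balls of radius $\ge 2$, which cover all of $X$ at cost $\ge 2^{\alpha}$. Hence a cover of $X$ is either a single big ball, of cost $\ge 2^{\alpha}\ge n\ge\gamma(G)$, or a family of radius-$1$ balls whose centres dominate $G$, of cost equal to its size; minimizing, $\opt(X)=\min(2^{\alpha},\gamma(G))=\gamma(G)$.

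Now suppose for contradiction that a polynomial-time algorithm $\mathcal{A}$ solves MCC within factor $\rho<c\log|X|$, with $c$ the constant of Theorem~\ref{th:mds}. Running $\mathcal{A}$ on the instance above yields a cover $\mathcal{B}$ with $\cost(\mathcal{B})\le\rho\,\gamma(G)$. I would then read off a dominating set: discard from $\mathcal{B}$ every ball of radius $<1$ (these cover no client); if a ball of radius $\ge 2$ remains, output $D=V$, and otherwise output $D=\{v:\text{some }B(y_v,r)\in\mathcal{B}\text{ with }1\le r<2\}$. In the first case $\cost(\mathcal{B})\ge 2^{\alpha}\ge n$, so $|D|=n\le\rho\,\gamma(G)$; in the second case every remaining ball costs at least $1$, so $|D|\le\cost(\mathcal{B})\le\rho\,\gamma(G)$, and $D$ dominates $G$ because the sets $B(y_v,r)\cap X=\{x_u:u\in N[v]\}$ cover $X$. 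Either way $D$ is, in polynomial time, a dominating set of size at most $\rho\,\gamma(G)<c\log|V|\cdot\gamma(G)$ (using $|X|=|V|$), contradicting Theorem~\ref{th:mds} under $\mathcal{P}\neq\mathcal{NP}$. This yields the claim.

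The step I expect to be the crux is pinning down $\opt(X)=\gamma(G)$ together with handling the ``large ball escape'' in the reduction: the hypothesis $\alpha\ge\log|X|$ enters precisely through $2^{\alpha}\ge|X|\ge\gamma(G)$, which makes one big ball never cheaper than an optimal dominating cover; and it is exactly because $\gamma(G)$ can be as large as $\Theta(|X|)$ that the approximate cover $\mathcal{B}$ may legitimately contain such a ball, forcing the two-case extraction above — in which case the trivial set $V$ is already within the target factor. The remaining ingredients (metric axioms, polynomial construction time, and the correspondence between radius-$1$ balls and closed neighbourhoods) are routine.
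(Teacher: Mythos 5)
Your proposal is correct and follows essentially the same route as the paper: a reduction from Dominating Set in which radius-$1$ balls around server copies of vertices correspond to closed neighbourhoods, and the hypothesis $\alpha \geq \log |X|$ prices any larger ball at cost at least $|X|$, so that $\opt$ equals the domination number and an approximate cover converts (via the same two-case split the paper does with $k \leq |X|$ versus $k > |X|$) into a comparably good dominating set, contradicting Kann's bound. The only differences are cosmetic: you use distances $\{1,2\}$ (making the triangle inequality trivial) where the paper uses $\{1,2,3\}$, and you phrase the argument directly as an approximation-preserving reduction rather than first establishing the decision-version equivalence.
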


\begin{proof}
To prove this theorem we show a reduction from DSP. Given an instance $(G=(V,E),k)$ of DSP we construct an instance of MCC. The instance of MCC consists of two sets of points $X$ (clients) and $Y$ (servers), and a metric $d$ defined on $X\cup Y$. Let $V= \{v_1,v_2,\ldots,v_n\}$, where $n=|V|$. For each $v_i\in V$, $Y$ contains a point $y_i$ and $X$ contains a point $x_i$. For any point $p\in X\cup Y$, $d(p,p)=0$. For $i,j \in [n]$, $d(x_i,y_j)$ is 1 if $i=j$ or the edge $(v_i,v_j) \in E$, and $d(x_i,y_j)$ is 3 otherwise. For $i,j \in [n]$ such that $i \neq j$, we set $d(x_i,x_j) = d(y_i,y_j) = 2$. 

Consider two nonadjacent vertices $v_i$ and $v_j$. For any $x_t \in X$ such that $t\neq i,j$, $d(x_i,x_t)+d(x_t,y_j) \geq 3$. Similarly, for any $y_t \in Y$ such that $t\neq i,j$, $d(x_i,y_t)+d(y_t,y_j) \geq 3$. Thus $d$ defines a metric. Next we will prove that $G$ has a dominating set of size at most $k$ iff the cost of covering the points in $X$ using the balls around the points in $Y$ is at most $k$. 

Suppose $G$ has a dominating set $J$ of size at most $k$. For each vertex $v_j \in J$, build a radius 1 ball around $y_j$. We return this set of balls $\mathcal{B}$ as the solution of MCC. Now consider any point $x_i\in X$. If $v_i \in J$, then $x_i$ is covered by the ball around $y_i$. Otherwise, there must be a vertex $v_j \in J$ such that $(v_i,v_j)\in E$. Then $d(x_i,y_j)$ is 1 and $x_i$ is covered by the ball around $y_j$. Hence $\mathcal{B}$ is a valid solution of MCC with cost at most $k$.

Now suppose there is a solution $\mathcal{B}$ of MCC with cost at most $k$. If $k > |X|$, then $V$ is a dominating set for $G$ of size $|X| < k$. If $k \leq |X|$, our claim is that the radius of each ball in $\mathcal{B}$ is 1. Suppose one of the balls $B$ has a radius more than 1. Then the way the instance of MCC is created the radius should be at least 3. Hence $k\geq 3^{\alpha} \geq 3^{\log |X|} > |X|$, which is a contradiction. Now consider the set of vertices $J$ corresponding to the centers of balls in $\mathcal{B}$. It is not hard to see that $J$ is a dominating set for $G$ of size at most $k$.

Let OPT be the cost of any optimal solution of MCC for the instance $(X,Y,d)$. Then by the properties of this reduction the size of any minimum dominating set for $G$ is OPT. Thus if there is an approximation algorithm for MCC that gives a solution with cost $(c \log |X|) \cdot $OPT, then using the reduction we can produce a dominating set of size $(c \log |V|) \cdot $OPT. Then from Theorem \ref{th:mds} it follows that $\mathcal{P}=\mathcal{NP}$. This completes the proof of our theorem.
\end{proof}

\section{Conclusions}
\label{sec:concl}

One generalization of the MCC problem that has been studied \cite{CharikarP04,BiloCKK05} includes fixed costs for opening the servers. As input, we are given two point sets $X$ (clients) and $Y$ (servers), a metric on $Z = X \cup Y$, and a {\em facility cost} $f_y \geq 0$ for each server $y \in Y$. The goal is to find a subset $Y' \subseteq Y$, and a set of balls $\{B_y \ | y \in Y' \mbox{ and } B_y \mbox{ is centered at } y\}$ that covers $X$, so as to minimize $\sum_{y \in Y'} (f_y + r(B_y)^{\alpha})$. It is not hard to see that our approach generalizes in a straightforward way to give a $(1 + \eps)$ approximation to this problem using quasi-polynomial running time. To keep the exposition clear, we have focussed on the MCC rather than this generalization.     

The main open problem that emerges from our work is whether there one can obtain a $(1 + \eps)$-approximation for the $k$-clustering problem in quasi-polynomial time.
\\\\\textbf{Acknowledgements.} We would like to thank an anonymous reviewer of
an earlier version of this paper for suggestions that improved the guarantees
and simplified the proof of Theorem~\ref{thm:partitioning}. We also thank other reviewers for their feedback and pointers to the literature.
    
\bibliographystyle{plain}
\bibliography{paper106}

\end{document}